\DeclareFixedFont{\auacc}{OT1}{phv}{b}{it}{18}   
\DeclareFixedFont{\newauacc}{OT1}{ptm}{b}{rm}{12}   
\DeclareMathSymbol{\R}{\mathord}{AMSb}{"52}
\DeclareMathSymbol{\C}{\mathord}{AMSb}{"43}
\DeclareMathSymbol{\Z}{\mathord}{AMSb}{"5A}
\DeclareMathSymbol{\N}{\mathord}{AMSb}{"4E}
\DeclareMathSymbol{\K}{\mathord}{AMSb}{"4B}
\DeclareMathSymbol{\M}{\mathord}{AMSb}{"4D}
\DeclareMathSymbol{\Q}{\mathord}{AMSb}{"51}
\DeclareMathSymbol{\Lset}{\mathord}{AMSb}{"4C}
\newtheorem{theorem}{Theorem}
\newtheorem{corollary}{Corollary}
\newtheorem{lem}{Lemma}
\newtheorem{obs}{Observation}
\newtheorem{remark}{{Remark}}
\newcommand\defeq{\mathrel{:=}}
\newcommand\disteq{\stackrel{\mathrm{d}}{=}}
\newcommand\distapprox{\stackrel{\mathrm{d}}{\approx}}
\newcommand\ceiltheta{{\lceil \alpha \rceil}}
\newcommand\ie{{\em i.e.}}
\newcommand\etal{{\em et al.}}
\definecolor{webmag}{rgb}{0.5,0,0.5}
\newcommand{\nop}[1]{}
\title{Fundamentals of the Backoff Process in 802.11: \\ Dichotomy of the Aggregation}
\author{\authorblockN{Jeong-woo~Cho~and~Yuming~Jiang}
\thanks{This work was supported in part by ``Centre for Quantifiable Quality of Service in Communication Systems, Centre of Excellence'' appointed by The Research Council of Norway, and funded by The Research Council, NTNU and UNINETT. A part of this work was done when J. Cho was with EPFL, Switzerland. A preliminary abstract version of this work appeared at ACM SIGMETRICS Workshop on Mathematical Performance Modeling and Analysis (MAMA'09).}
\thanks{J. Cho and Y. Jiang are with the Centre for Quantifiable Quality of Service in Communication Systems, Norwegian University of Science and Technology (NTNU), NO-7491 Trondheim, Norway (email: \{jeongwoo,jiang\}@q2s.ntnu.no).}
}
\let\@copyrightspace\relax
\begin{document}
\maketitle

\newcommand{\expectation}{\textsf{E}}
\newcommand{\probability}{\textsf{P}}
\newcommand{\pdf}{\textsf{f}}
\newcommand{\slow}{\ell}
\newcommand{\nextline}{\mbox{}\\}
\newcommand{\ud}{\mathrm{d}}
\newcounter{tempcounter}
\newcounter{acounter}

\begin{abstract}
This paper discovers fundamental principles of the backoff process that governs the performance of IEEE 802.11. A simplistic principle founded upon regular variation theory is that the backoff time has a truncated {\itshape Pareto-type} tail distribution with an exponent of $\boldsymbol{ (\log \gamma)/\log m}$ ($\boldsymbol{m}$ is the multiplicative factor and $\boldsymbol{\gamma}$ is the collision probability). This reveals that the per-node backoff process is heavy-tailed in the strict sense for $\boldsymbol{ \gamma > 1/m^2}$, and paves the way for the following unifying result.

The state-of-the-art theory on the superposition of the heavy-tailed processes is applied to establish a {\itshape dichotomy} exhibited by the aggregate backoff process, putting emphasis on the importance of time-scales on which we view the backoff processes. While the aggregation on normal time-scales leads to a Poisson process, it is approximated by a new limiting process possessing {\itshape long-range dependence} (LRD) on coarse time-scales. This dichotomy turns out to be instrumental in formulating short-term fairness, extending existing formulas to arbitrary population, and to elucidate the absence of LRD in practical situations. A refined wavelet analysis is conducted to strengthen this argument.

\end{abstract}

\begin{keywords}
Point process theory, regular variation theory, mean field theory.
\end{keywords}

\section{Introduction}\label{sec:intro}
Since its introduction, the performance of IEEE 802.11 has attracted a lot of research attention and the center of the attention has been the {\it throughput} \cite{refBianchi, refKumar07}. Recently, other critical performance aspects of 802.11 also burst onto the scene, which include {\it short-term fairness} \cite{refKoksal,refMarkusExp} and {\it delay} \cite{refTickoo}. It goes without saying that there has been a phenomenal growth of Skype and IPTV users \cite{refSkype,refIPTV} and it is reported in \cite{refPew} that an ever-increasing percentage of these users connects to the Internet through wireless connections in US. Remarkably, it is found in \cite{refSkype} that {\it jitter} is more negatively correlated with Skype call duration than delay, \ie, Skype users tend to hang up their calls earlier with large jitters. This finding empirically testifies large jitter of access networks {\it annoys} Skype users, let alone QoS (quality of service). This quantified dissatisfaction of users provides a motivation for a thorough understanding of delay and jitter performance in 802.11.

For throughput analysis, Kumar \etal, in the seminal paper \cite{refKumar07}, axiomized several remarkable observations based on a {\it fixed point equation} (FPE), advancing the state of the art to more systematic models and paving the way for more comprehensive understanding of 802.11. Above all, one of the key findings of \cite{refKumar07}, already adopted in the field \cite{refKwak,refSakuraiDelay}, is that the full interference model\footnotemark, also called the single-cell model \cite{refKumar07}, in 802.11 networks leads to the {\it backoff synchrony property} \cite{refProutierePushing} which implies the backoff process can be completely separated and analyzed through the FPE technique. Another observation in \cite{refKumar07} was that if the collision probability $\gamma$ is constant, one can derive the so-called Bianchi's formula by appealing to renewal reward theorem \cite{refGibbs}, without the Markov chain analysis in \cite{refBianchi}.

\footnotetext{In the full interference or single-cell model, every node interferes with the rest of the nodes, \ie, its corresponding interference graph is fully connected.}

An intriguing notion, called {\it short-term fairness}, has been introduced in some recent works \cite{refKoksal,refBergerUniform,refMarkusExp}, defining $ \probability [z \vert \zeta]$ as the probability that other nodes transmit $z$ packets while a tagged node is transmitting $\zeta$ packets. It can be easily seen that this notion pertains to a purely backoff-related argument also owing to the backoff synchrony property in the full interference model \cite{refKumar07}. The two papers \cite{refBergerUniform,refMarkusExp}, in the course of deriving equations for $ \probability [z \vert \zeta]$, assumed that the summation of the backoff values generated per packet, which we denote by $\Omega$, is uniformly and exponentially distributed, respectively. Specifically, despite the same situation where {\it two} nodes contend for the medium, the former \cite{refBergerUniform} assumed that $\Omega$ is {\it uniformly} distributed because the initial backoff is uniformly distributed over the set $\left\{ 0, 1,\cdots,2b_0 -1 \right\}$ where $2b_0$ is the initial contention window and observed in \cite[Fig. 2]{refBergerUniform} that this assumption leads to a good match between the expression $ \probability [z \vert \zeta]$ derived under the uniform assumption on $\Omega$ and the testbed data measured in their experiments, while the latter \cite{refMarkusExp} also observed in \cite[Fig. 5(a)]{refMarkusExp} that the testbed data measured in their experiments closely match the expression $\probability [Z\vert \zeta]$ derived under the the exponential assumption on $\Omega$:
\vskip 0pt \noindent { \bfseries\textcolor{webmag}{Q1}:\mdseries\itshape ``What makes two different observations?''} (to be answered in Section \ref{sec:meanfieldfairness})
\vskip 0pt

In addition, the two works \cite{refBergerUniform,refMarkusExp} acquired the expression of $ \probability [z \vert \zeta]$ only for the two node case. A more general formula for arbitrary number of nodes should deepen our appreciation of short-term fairness. It is natural to ask the following pertinent questions:
\vskip 0pt \noindent { \bfseries\textcolor{webmag}{Q2}:\mdseries\itshape ``Can we develop a general model for short-term fairness?''} (to be answered in Corollaries \ref{th:aintertxprob} \& \ref{th:intertxprob})
\vskip 0pt

In proportion as people take a growing interest in the delay performance of 802.11, the number of fundamental questions that we face increases. In \cite{refAbdrabou}, it was argued based on simulation results that the access delay in 802.11 closely follows a Poisson distribution. They have shown that the number of successful packet transmissions by any node in the network over a time interval has a probability distribution that is close to Poisson by an upper bounded distribution distance. This raises an intriguing question:

\vskip 0pt \noindent { \bfseries\textcolor{webmag}{Q3}:\mdseries\itshape ``Is there a Poissonian property? If yes, what is the cause?''} (to be answered in Theorem \ref{th:poi})
\vskip 0pt

Another case in point is found in a recent work \cite{refSakuraiDelay} that extends the access delay analysis in the seminal paper of Kwak \etal \cite{refKwak} and makes an attempt at analyzing higher order moments by applying the FPE technique. One interesting finding in \cite{refSakuraiDelay} is that the access delay has a {\it wide-sense heavy-tailed} distribution \cite[Theorem 1]{refSakuraiDelay} which means that its moment generating function $\int_{0}^{\infty} {\rm e}^{tx} f(x) \ud x $ is $\infty$, $\forall t>0$, where $f(x)$ is the corresponding pdf (probability density function) \cite{refHeavyRolski}. One should be careful in interpreting this finding because the wide-sense heavy-tailedness does {\it not} imply strict sense heavy-tailedness, which roughly means the ccdf (complementary cumulative distribution function) is of Pareto-type \cite{refCrovella} with an exponent over $(-2,0)$. In fact, there are lots of distributions, namely, lognormal, Pareto, Cauchy and Weibull distributions, which belong to the class of wide-sense heavy-tailed distributions. Consequently, the discussion poses the following {\it challenge} which is undoubtedly a tantalizing question.

\vskip 0pt \noindent { \bfseries\textcolor{webmag}{Q4}:\mdseries\itshape ``What is the distribution type of the delay-related variables?''} (to be answered in Theorems \ref{th:heavy} \& \ref{th:powertail})
\vskip 0pt


Finally, it is, perhaps, surprising that long-range dependence of 802.11 has not been rigorously analyzed even for the single node case, not to mention the aggregate process of many nodes. One minor contribution of this paper is that we prove in Theorem \ref{th:powertail} that the individual arrival process (consisting of successful transmissions of one node) can be viewed as a renewal process with heavy-tailed inter-arrival times, which implies that the individual arrival process possesses long-range dependence simply by appealing to \cite{refLipsky}.

However, for the superposition arrival process (consisting of successful transmissions of all nodes), there is no clear answer. For example, Tickoo and Sikdar \cite{refTickooSS} conjectured the absence of long-range dependence of aggregate total load, which we call superposition arrival process. It is remarkable that the absence of long-range dependence has been also supported through empirical analysis such as wavelet-based method \cite{refWaveletLens} by Veres and Boda \cite{refChaoticVeres} in the context of TCP flows in wired networks. Since there is an analogy between the backoff mechanisms adopted by 802.11 and TCP (in wired networks) in that
\begin{compactenum}
\item both of them adopt backoff schemes (802.11) or retransmission scheme (TCP) where the probability of these events is either the collision probability (802.11) or the packet drop probability in router buffers (TCP),
    \item the mean of the backoff (contention window in 802.11) or retransmission time (timeout in TCP) doubles for each backoff or retransmission,
\end{compactenum}
one might wonder if there is a fundamental reason that elucidates these observations.

\vskip 0pt \noindent { \bfseries\textcolor{webmag}{Q5}:\mdseries\itshape ``Does the aggregate transmission process possess long-range dependence? If yes, why is it seldom observed?''} (to be answered in Theorem \ref{th:itp} and Section \ref{sec:longrange})
\vskip 0pt

The focus of this paper is on the backoff process in 802.11, since it plays the central role in quantifying the performance of 802.11 \cite{refKumar07}. For example, to grasp the heart of the delay properties, the backoff value distribution in 802.11 DCF (distributed coordination function) mode can be used as a {\it surrogate} for the access delay \cite{refKwak}. 
As discussed above, the throughput performance and short-term fairness performance also depend on the backoff process and are particularly affected by the backoff synchrony property. Essentially, once the backoff distribution is obtained, various performance aspects can be analyzed.

\subsection{Contributions of this work}
This paper discovers fundamental principles of the backoff process and provides answers to the open questions highlighted above, which constitute the contributions of the paper. Particularly, it turns out that we find out the answers to most aforementioned questions {\bfseries\textcolor{webmag}{Q2}}-{\bfseries\textcolor{webmag}{Q5}} in the course of deriving the following two principles based on a new methodology, \ie, point process approach.

\begin{compactitem}
\item {\bf Power-tail principle}: The per-packet backoff time distribution has a slowly-varying power-tail (Theorem \ref{th:powertail}).
\item {\bf Dichotomy of aggregation}: Depending on the time-scales on which the backoff processes are aggregated, the resultant process becomes either Poissonian or a new process (Theorems \ref{th:poi} \& \ref{th:itp}).
\end{compactitem}

The power-tail principle, which is derivable only after we accumulate a store of knowledge (Section \ref{sec:meanfieldfairness}, Lemma \ref{lem:meanexist} and Theorem \ref{th:heavy}), characterizes the backoff distribution in a tractable and simplistic way, owing to regular variation theory, answering {\bfseries\textcolor{webmag}{Q4}}.
The dichotomy of aggregation implies that, when we view the aggregate process on normal time-scales, owing to the tendency of each component process to become sparse as population grows, we observe only a Poissonian as its marginal distribution. However, viewed on coarse time-scales, the aggregate process is identified as a long-range dependent process. This rigid dichotomy is instrumental in finding answers to {\bfseries\textcolor{webmag}{Q2}}, {\bfseries\textcolor{webmag}{Q3}} and {\bfseries\textcolor{webmag}{Q5}}, and expatiates upon the coexistence of contrary properties suggested by {\bfseries\textcolor{webmag}{Q3}} and {\bfseries\textcolor{webmag}{Q5}}.
All the theorems in the paper are {\it closely linked} with each other, forming a solid framework for the performance analysis of 802.11. These results help us to get the complex details of the backoff process in 802.11 into perspective under one framework.

The rest of the paper is organized as follows. In Section \ref{sec:justification}, we revisit the Bianchi's formula along with a survey of recent advances in mean field theory, with which the analysis of the backoff process at one node can be decoupled from other nodes. 
In Section \ref{sec:meanfieldfairness}, we present the exact distribution of per-packet backoff. We establish in Section \ref{sec:intertxdist} that the aggregate backoff process can be approximated by a Poisson process under the large population regime. In Section \ref{sec:asymptotic}, we extend to asymptotic analysis and prove the power-tail principle. In Section \ref{sec:shortterm}, we first propose a new process approximation on a coarse time scale, which is then applied to formulate short-term fairness and to identify long-range dependence. After conducting a wavelet analysis on long-range dependence in Section \ref{sec:longrange}, we conclude this paper.

\section{Bianchi's Formula Revisited}\label{sec:justification}

The backoff process in 802.11 is governed by a few rules if the duration of per-stage backoff is taken to be exponential: (i) every node in backoff stage $k$ attempts transmission with probability $p_k$ for every time-slot; (ii) if it succeeds, $k$ changes to $0$; (iii) otherwise, $k$ changes to $(k+1)$ \verb#mod# $(K+1)$ where $K$ is the index of the highest backoff stage. Markov chain models, which have been widely used in describing complex systems including 802.11, however, very often lead to excessive complications as discussed in Section \ref{sec:intro}.
 In this section, we present a surrogate tool for the analysis, {\it mean field theory}. It is noteworthy that the rules used in 802.11, \ie, (i)--(iii), closely resemble the mean field equations laid out below.

\subsection{Basic Operation of DCF Mode}\label{sec:dcf}

 Time is slotted. Each node following the randomized access procedure of 802.11 distributed coordination function (DCF) generates a {\it backoff value} after receiving the Short Inter-Frame Space (SIFS) if it has a packet to send. This backoff value is {\it uniformly} distributed over $\{$0, 1, $\cdots$, 2${b_0}-$1$\}$ (or $\{$1, 2, $\cdots$, 2${b_0} \}$) where $2 b_0$ is the initial contention window.

 Whenever the medium is idle for the duration of a Distributed Inter-Frame Space (DIFS), a node unfreezes (starts) its countdown procedure of the backoff and decrements the backoff by one per every time-slot. It freezes the countdown procedure as soon as the medium becomes busy. There exist $K+1$ backoff stages whose indices belong to the set $\{0,1,\cdots,K\}$ where we assume $K>0$. If two or more wireless nodes finish their countdowns at the same time-slot, there occurs a collision between RTS (ready to send) packets if the CSMA/CA (carrier sense multiple access with collision avoidance) is implemented, otherwise two data packets collide with each other. If there is a collision, each node who participated in the collision multiplies its contention window by the multiplicative factor $m$. In other words, each node changes its backoff stage index $k$ to $k+1$ and adopts a new contention window $2b_{k+1} = 2 m^{k+1} \cdot b_0 $. If $k+1$ is greater than the index of the highest backoff stage number, $K$, the node steps back into the initial backoff stage whose contention window is set to $2 b_0$. In the IEEE 802.11b standard, $m=2$, $K=6$ ($7$ attempts per packet), and $2 b_0 = 32$ are used.

\ifthenelse{\boolean{longver}}{This work focuses on the performance of {\it single-cell} 802.11 networks in which all 802.11-compliant nodes are within such a distance of each other that a node can hear whatever the other nodes transmit. Since all nodes simultaneously freeze their backoff countdown during channel activity, the total time spent in backoff stages up to any time is the same for all nodes. Therefore, it is {\it sufficient} to analyze the backoff process in order to investigate the performance of single-cell networks.}{This work focuses on the performance of {\it single-cell} 802.11 networks where it is {\it sufficient} to analyze the backoff process in order to investigate the performance of single-cell networks}
\ifthenelse{\boolean{longver}}{This technique has been adopted in many works including \cite{refKumar07,refBordenaveMF,refJYMF}.}{}

\subsection{The Bianchi's Formula}\label{sec:revisited}

In performance analysis of 802.11, Bianchi's formula and its many variants are probably the most known \cite{refBianchi,refKumar07,refKwak,refSakuraiDelay}. Assuming that there are $N$ nodes, the Bianchi's formula can be written compactly in a more general {\it fixed point equation} (FPE) form:
\setcounter{tempcounter}{\value{equation}}
\setcounter{equation}{0}
\renewcommand{\theequation}{FPE\arabic{equation}}
\begin{align}\label{eq:fixedpoint}
\bar p  & =  \frac{\sum_{k=0}^{K} \gamma^k}{\sum_{k=0}^{K} \frac{\gamma^k}{q_k}}, \\
\gamma  & = 1 - {\rm e}^{- (N-1){\bar p}} \label{eq:fixedpointg}
\end{align}
where ${\bar p} $ and $\gamma$ respectively designate the average attempt rate and collision probability of every node at each time-slot. The attempt probability in backoff stage $k$ is denoted by $q_k$ and defined as the inverse of the mean contention window, \ie, $q_k = 2/(2b_k - 1)$. It satisfies $0<q_k \leq 1$ as $b_k \geq 1$. Note that Bianchi's formula holds under the well-known assumption:
\begin{compactenum}[\bfseries{A}.\arabic{acounter}]\setcounter{acounter}{1}
\item\addtocounter{acounter}{1} All the transmission queues of nodes are saturated.
\end{compactenum}

\renewcommand{\theequation}{\arabic{equation}}
\setcounter{equation}{\value{tempcounter}}

\ifthenelse{\boolean{longver}}{It is, perhaps, surprising that whether the formula \eqref{eq:fixedpoint} is valid or not has never been completely agreed upon despite the fact that the formula has been a {\it de facto} principal tool for the analysis.}{} Exactly under which condition \eqref{eq:fixedpoint} holds is recently being rediscovered with rigorous mathematical arguments \cite{refBordenaveMF,refSharmaScaledMarkov,refJYMF}, which, sometimes called {\it mean field approximation}. This fundamental approach was originally developed by Bordenave \etal \cite{refBordenaveMF} and Sharma \etal \cite{refSharmaScaledMarkov}. Remarkably, Bordenave \etal \cite{refBordenaveMF} adopted a generalized particle interaction model which encompasses Markovian evolution of the system other than particles at the same time. Bena\"im and Le Boudec \cite{refJYMF} overcame some limitations of the model \cite{refBordenaveMF}, broadening its applicability. The main result here is that, as the number of particles goes to infinity, \ie, $N\to \infty$, the state distribution of every node evolves according to a set of $K+1$ dimensional nonlinear ordinary {\it differential} equations under an appropriate scaling of time. Bena\"im and Le Boudec \cite{refJYMF} also observed that decoupling approximation represented by \eqref{eq:fixedpoint} does {\it not} hold if the differential equations does not have a unique globally attractor.

 Remarkably, Bordenave \etal \cite{refBordenaveMF} have proven that the differential equations are globally stable if $K=\infty$ and the re-scaled attempt probability $ Q_{k} \defeq N q_{k}$ satisfies $Q_{k+1} = Q_{k}/2$ with $ Q_0 < \ln 2$. In the meantime, Sharma \etal \cite{refSharmaScaledMarkov} obtained a result for $K=1$ and mentioned the difficulty to go beyond. However, the case for other finite $K$ has remained to be proved \cite[pp.833]{refJYMF}. Recently, we solved this issue to a large extent \cite{refChoValidity}  by proving that, for finite $K$, a simplistic condition $Q_k \leq 1$ (or $q_{k} \leq 1/N$) for all $k \in \{0, \cdots, K \}$ guarantees the global stability of the differential equations as well as the uniqueness of the solutions to the Bianchi's formula.  
 As we discussed in \cite{refChoValidity}, there are still many outstanding problems upon the stability of the associated differential equations.

While one of the aims of these efforts \cite{refBordenaveMF,refSharmaScaledMarkov,refJYMF,refChoValidity} is to identify the fundamental conditions under which the collision probability is deterministic and time-invariant for large population ($N=\infty$), once we assume the collision probability is such for $N<\infty$, the demonstration of the formula \eqref{eq:fixedpoint} is shown to be straightforward \cite{refKumar07}. That is to say, we need to make the following simple assumption.
\begin{compactenum}[\bfseries{A}.\arabic{acounter}]
\item\addtocounter{acounter}{1} For each node, conditional upon its transmission attempt, the collision events form an i.i.d. sequence, which is independent from other nodes.
\end{compactenum}
The observation in \cite{refKumar07} was that, under the above assumptions, one can easily derive \eqref{eq:fixedpoint} by appealing to renewal reward theorem \cite{refGibbs}, without the Markov chain analysis in \cite{refBianchi}. Thus from now on, the attempt probability is given by \eqref{eq:fixedpoint}. As a by-product, we can also see that the distribution of backoff stages, which we denote by $\phi_k$, $k \in \{ 0, \cdots, K \}$, takes the following form
\begin{align}
\phi_k  =  \frac{\gamma^k}{q_k}  \cdot \frac{1}{ \sum_{k=0}^{K} \frac{\gamma^k}{q_k}  }  . \label{eq:phiknew}
\end{align}

\ifthenelse{\boolean{longver}}{The expression of the collision probability \eqref{eq:fixedpointg} was first used in \cite[Section IV]{refKumar07} where it was shown that if the attempt probability of every node $\bar p$ is inversely proportional to $N$, \eqref{eq:fixedpointg} is implied by appealing to Poisson convergence theorem \cite{refPoisson}.}{The expression of the collision probability \eqref{eq:fixedpointg} was first used in \cite[Section IV]{refKumar07}.} A similar expression was also used in \cite{refBordenaveMF,refJYMF} under the intensity scaling, which means that the attempt probability of every node in any backoff stage is of the order of $1/N$. We use \eqref{eq:fixedpointg} instead of its original version in \cite{refBianchi} because, as argued in \cite{refBordenaveMF,refJYMF}, the approximation provided by \eqref{eq:fixedpointg} is well founded on a mean field result. Lastly, it is also noteworthy that the analysis in Section \ref{sec:meanfieldfairness} and \ref{sec:intertxdist} does not depend on whether $K$ is finite or not.

\section{Backoff Analysis}\label{sec:meanfieldfairness}

The backoff value distribution and the backoff stage distribution should not be confused in meaning. While the latter is the distribution of the backoff stage of a node, the former is the distribution of the backoff value generated for initiating the backoff countdown when the node has a packet to transmit. The backoff value distribution at backoff stage $k$ has a {\it discrete uniform} pdf (probability density function)  $f_k(\cdot)$ on the integers $\{ 0, 1, \cdots, 2b_k -1\}$  with mean ${1}/{q_k} = (2b_k -1)/2$ and variance $${\rm Var}_k = \frac{b_k^2}{3} - \frac{1}{12} = \frac{1}{3} \cdot \frac{2b_k+1}{2b_k -1} \cdot \frac{1}{q_k^2} = \frac{v_k^2}{q_k^2}$$ where the pre-factor is denoted by $v_k^2$ to simplify the exposition in the current section. Note that $\lim_{b_k \to \infty} v_k^2 = 1/3$. 


Let $\Omega$ and $f_{\Omega}(\cdot)$ respectively denote the sum of the backoff values generated for a packet, and its pdf. Also denote by $\bar \Omega$ its mean and $\sigma_{\Omega}^2$ its variance. 
It should be clear that the sum of the backoff values generated for a packet $\Omega$ which we baptize in this paper {\it per-packet backoff} can be formally defined as a {\it compound} random variable:
\begin{equation}\label{eq:defbsigma}
\setlength{\fboxrule}{1pt}\fcolorbox{magenta}{white}{$\textstyle \Omega \defeq \sum_{k=0}^\kappa B_k$ }
\end{equation}
where $B_{k}$ is a random variable denoting the backoff value generated at the $k$th backoff stage, for a packet of a tagged node, and $\kappa$ is also a random variable designating the highest backoff stage reached by the packet.

The probability that the $k$th backoff stage is reached during the backoff duration for a packet can be computed as $\gamma^k$ irrespective of the backoff distribution at any backoff stage. Hence we have $$\probability [ \kappa =k] = \gamma^k - \gamma^{k+1}, \quad \forall k \in \{0, \cdots,K-1\},$$ and $\probability [\kappa = K] = \gamma^K$. From Bayes' theorem, $f_\Omega(\cdot)$ becomes:
\begin{equation}
\textstyle f_{\Omega} (x) = \sum_{k=0}^K f_{\Omega} (x ~\vert ~ \kappa=k) \cdot \probability \left[ \kappa = k \right]
\end{equation}
where $f_{\Omega} (x ~\vert ~ \kappa=k)$ denotes the sum of the backoff values from $0$th to $k$th stages for a given $k$. Applying the fact that the sum of $k$ random variables with pdfs $f_0(\cdot),\cdots,f_k(\cdot)$ has a pdf of the convolution of the pdfs yields \begin{equation}\label{eq:totalpdf}
\textstyle f_{\Omega} (x) = f^{*K}(x) \gamma^{K} + (1-\gamma) \sum_{k=0}^{K-1} f^{*k}(x) \gamma^k
\end{equation}
where $f^{*k}(\cdot) \defeq ( f_0* \cdots *f_k )(\cdot)$ is the convolution of $k+1$ functions. In a similar way, $\bar \Omega$ can be computed from \eqref{eq:defbsigma}:
\begin{align}
\textstyle\bar \Omega& = \textstyle\expectation \left[ \sum_{k=0}^\kappa B_k \right] = \sum_{k=0}^K \expectation \left[ {\sum_{k'=0}^k B_{k'}}  \right] \cdot \probability[\kappa = k] \nonumber \\
& \textstyle= \sum_{k=0}^K \left( \sum_{k'=0}^k \frac{1}{q_{k'}} \right) \cdot \probability[\kappa = k]  \label{eq:thirdeqbarb}.
\end{align}
By manipulating \eqref{eq:thirdeqbarb} combined with the expression of $\probability[\kappa = k]$, it is easy to see that
\begin{align}
 \bar \Omega =   \sum_{k=0}^K \frac{\gamma^k}{q_k}. \label{eq:totalmean}
\end{align}
In addition, using $\expectation [  B_k^2 ] = ({1+v_k^2})/{q_k^2} $, the second moment of $\Omega$ can be rearranged as
\begin{align}
& \textstyle\sigma_{\Omega}^2 + \bar \Omega^2 = \expectation \left[ \left( \sum_{k=0}^\kappa B_k \right)^2 \right] \nonumber \\
& =  \textstyle  \sum_{k=0}^K \expectation \left[ \left( {\sum_{k'=0}^k B_{k'}} \right)^2   \right] \probability[\kappa = k] \label{eq:secondcov} \\
& =  \textstyle \sum_{k=0}^K \textstyle \left( \displaystyle\sum_{k'=0}^k \textstyle \frac{1+v_{k'}^2}{q_{k'}^2}  + 2 \displaystyle \sum_{i=1}^k \sum_{j=0}^{i-1} \textstyle \frac{1}{q_{i} q_{j}} \right) \probability[\kappa = k]    \label{eq:thirdcov} \\
& = \textstyle \left( \sum_{k=0}^K \frac{\gamma^k}{q_{k}^2} (1+v_k^2) \right)+ 2 \left( \sum_{k=1}^K \frac{\gamma^k}{q_{k} } \sum_{i=0}^{k-1} \frac{1}{q_{i}} \right) \nonumber
\end{align}
The above equalities can be easily verified by rearranging \eqref{eq:secondcov} and \eqref{eq:thirdcov}. Moreover, it is shown in Appendix \ref{sec:dertotal} that, if $q_{k} = 2 / (2 b_0 m^k -1)$ as in the standard, $v_{\Omega}^2 \defeq \sigma_{\Omega}^2/\bar \Omega^2$ simplifies to
\begin{align}
  \frac{ \textstyle \sum_{k=0}^K \delta_k }{ \left( \textstyle \sum_{k=0}^{K} (b_0 m^k - 1/2 ) \gamma^k \right)^2 } - 1 \label{eq:totalcovcor}
\end{align}
where $$\textstyle \delta_k =\left(b_0 m^k - \frac{1}{2} \right) \gamma^k \left\{ \left( \frac{m+1}{m-1} +v_k^2 \right) \left(b_0 m^k - \frac{1}{2} \right) - k - \frac{2b_0-1}{m-1}  \right\}.$$

\begin{remark}\label{rem:perpacket}

\ifthenelse{\boolean{longver}}{We spare our breath for later sections, and briefly present only main points.}{}

\begin{asparaenum}[\rmfamily\bfseries{R\ref{rem:perpacket}}.1]
\ifthenelse{\boolean{longver}}{ \item An astute reader might realize that $v_\Omega$ as shown in \eqref{eq:totalcovcor} plays a key role when we apply central limit theorem in later sections to compare the random sums of i.i.d. random variables.}{}
\item The result puts forward an {\it alternative} viewpoint. We can view the backoff process {\it reflecting} the collision effect among nodes as if there is {\it no collision} at all and the per-packet backoff for every node has a distribution with mean $\bar\Omega$ and CV $v_\Omega$ (or equivalently variance $\sigma_\Omega^2$).
\item {}[{Answer to \bfseries\textcolor{webmag}{Q1}}] Consider the case $N=2$. It can be computed from \eqref{eq:totalcovcor} that $\Omega$ is approximately {\it  uniformly} distributed in 802.11b while it is {\it exponentially} distributed in 802.11a/g in the sense that $v_\Omega \approx 0.7$ (though slightly larger than $1/\sqrt{3}$) and $v_\Omega \approx 1.0$, respectively, mainly due to different initial contention windows ($2b_0=32$ in 802.11b and $2b_0 = 16$ in 802.11a/g). \ifthenelse{\boolean{longver}}{Thus $\Omega$ in 802.11a/g can be deemed exponentially distributed for $N=2$.}{} This is the reason why they \cite{refBergerUniform,refMarkusExp} observed that their testbed data closely match the expressions of inter-transmission probability $\probability [Z\vert \zeta] $, which were derived under their respective assumptions. Note that we communicated with the first author of \cite{refMarkusExp} to verify the protocol (802.11g) used in their testbed. We will formally define $\probability [Z\vert \zeta] $ in Section \ref{sec:interprobintro}.
\end{asparaenum}
\end{remark}

To verify the analysis, simulations have been conducted. We have used {\it ns-2} version 2.33 with its built-in 802.11 module and the parameter set of 802.11b, \ie, $m=2$ and $2b_0 = 32$, except that $K$ is varied to observe the asymptotic property. All simulations use a $3000s$ warm-up period and all quantities are measured over the next $320,000s$ ($\approx90h$).

\begin{figure}[t!]
\centering
\centerline{\includegraphics[width=8.2cm]{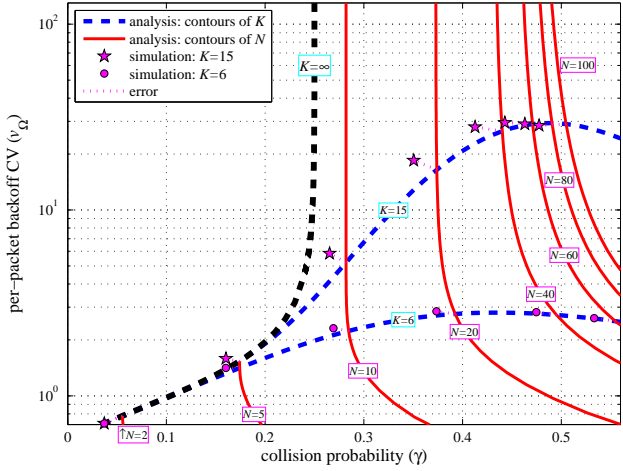}}
\caption{Per-packet backoff CV $v_\Omega$ vs. collision probability $\gamma$ for $K=6,15,\infty$; and $N=2,\cdots,100$.} \label{fig:perpacket}
\end{figure}

Fig. \ref{fig:perpacket} presents the per-packet CV $v_\Omega$, computed from \eqref{eq:totalcovcor}, \eqref{eq:fixedpoint} and \eqref{eq:fixedpointg}, and compared with the simulation results. The figure shows a good match between them. In the figure, the intersecting points of contours of $K$ and $N$ at each level decide $v_\Omega$ and $\gamma$ simultaneously. As is predicted by \eqref{eq:totalcovcor}, $v_\Omega$ goes to $\infty$ as $K$ goes to $\infty$ for $\gamma \geq 1/m^2 = 0.25$. It is remarkable that for a given $N \geq 9$ ($N \geq 5$ for 802.11a/g), $v_\Omega$ is extremely sensitive to $K$, forming a striking contrast with the insensitivity of $\gamma$ to $K$.

The discrepancy between analysis and simulation study is partly due to {\it reduced contention effect}, which is a less-known subtlety of DCF behavior discovered by Bianchi \etal \cite{refReduced} and is shown through simulations to be a factor of error by Sakurai and Vu \cite{refSakuraiDelay}.

\section{Point Process Approach: Poissonian Insights}\label{sec:intertxdist}

A basic property of per-packet backoff $\Omega$ discovered by Kwak \etal \cite[Theorem 1]{refKwak} and later strengthened by Kumar \etal \cite[Theorem 7.2]{refKumar07} is that the mean of per-packet backoff is proportional to the population, \ie, $\bar\Omega = \Theta(N)$. This turns out to play a key role in our point process approach in this section.

\subsection{Justification of Point Process Approach}\label{sec:ppapp}
In order to justify our point process approach, we need to show that the backoff process of each node has nonzero intensity, \ie, $\bar\Omega = \expectation[\Omega]$ is finite. Though, for finite $K$, this is self-evident from the form of \eqref{eq:totalmean}, we need to assume the following to prove $\bar\Omega <\infty$ for $K=\infty$.
\begin{compactenum}[\bfseries{A}.\arabic{acounter}]
\item\addtocounter{acounter}{1} $q_{k} = 2 / (2 b_0 m^k -1)$ for all $k \in \{ 0,\cdots,K\}$, and $m>1$.
\end{compactenum}
Under this assumption we can prove the following lemma which assures us that $\bar\Omega$ is finite whether $K$ is finite or not. We also would like to point out that a part of the proof of \cite[Theorem 7.2]{refKumar07}, which corresponds to the case $K=\infty$ of Lemma \ref{lem:meanexist} in our work, has a flaw because they should have proven $\gamma < 1/m$ before using $\sum_{k=0}^{\infty} (m\gamma)^k = 1/(1-m\gamma)$. 
\begin{lem}[Mean Exists]\label{lem:meanexist} \nextline
Under the above assumption, there exists a finite $K_0$ such that $\gamma <1/m$ and $\gamma$ is decreasing in $K$. This implies:
 \begin{compactitem}
 \item there exist $K_0$ such that $\gamma <1/m$ for all $K \geq K_0$ including $K=\infty$,
 \item the mean $\bar \Omega = \expectation[ \Omega ]$ exists for $K=\infty$.
 \end{compactitem}
\end{lem}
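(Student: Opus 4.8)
The plan is to regard the collision probability as a function $\gamma_K$ of the number of stages $K$, defined implicitly as the fixed point of the map $\gamma \mapsto g(\gamma,K)$, where $g(\gamma,K) \defeq 1 - {\rm e}^{-(N-1)\bar p(\gamma,K)}$ and $\bar p(\gamma,K) \defeq \big(\sum_{k=0}^K \gamma^k\big)\big/\big(\sum_{k=0}^K \gamma^k/q_k\big)$ are the right-hand sides of \eqref{eq:fixedpoint}--\eqref{eq:fixedpointg}, and then to show in turn that (i) $\gamma_K$ is decreasing in $K$, (ii) $\gamma_K < 1/m$ for all $K$ beyond some finite $K_0$, and (iii) the series for $\bar\Omega$ converges once $\gamma < 1/m$. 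The engine of the whole argument is an elementary monotonicity fact: for each fixed $\gamma \in (0,1)$, $\bar p(\gamma,K)$ is strictly decreasing in $K$. This reduces to a one-line ratio inequality, since appending the $(K+1)$-th term lowers the ratio exactly when $q_{K+1}\sum_{k=0}^K \gamma^k/q_k < \sum_{k=0}^K \gamma^k$, which holds because $q_k$ is strictly decreasing in $k$ under A.4 (each factor $q_{K+1}/q_k < 1$). As $\gamma = 1-{\rm e}^{-(N-1)\bar p}$ is increasing in $\bar p$, this makes $g(\gamma,K)$ decreasing in $K$ for fixed $\gamma$.

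Next I would transfer this to the fixed point. Setting $F(\gamma,K)\defeq g(\gamma,K)-\gamma$, one checks $F(0,K)>0$ (since $\bar p(0,K)=q_0>0$) and $F(1,K)<0$, so a root exists; invoking uniqueness of the solution to \eqref{eq:fixedpoint}--\eqref{eq:fixedpointg} (established in the decoupling regime discussed in Section \ref{sec:revisited}), $F(\cdot,K)$ crosses from $+$ to $-$ precisely at $\gamma_K$. Because $F(\gamma,K+1)<F(\gamma,K)$ pointwise, evaluating at $\gamma_K$ gives $F(\gamma_K,K+1)<F(\gamma_K,K)=0$, which by the sign pattern forces $\gamma_{K+1}<\gamma_K$. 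Hence $\gamma_K$ is decreasing, proving the first bullet and yielding a well-defined limit $\gamma_\infty=\lim_K\gamma_K$.

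To locate the threshold I would evaluate $g$ at $\gamma=1/m$. There the numerator $\sum_{k=0}^K m^{-k}$ stays bounded, while the denominator $\sum_{k=0}^K m^{-k}/q_k = \sum_{k=0}^K\big(b_0-\tfrac12 m^{-k}\big)$ grows linearly in $K$; thus $\bar p(1/m,K)\to 0$ and $g(1/m,K)\to 0<1/m$. So $F(1/m,K_0)<0$ for some finite $K_0$, which by the sign pattern gives $\gamma_{K_0}<1/m$, and monotonicity then yields $\gamma_K<1/m$ for all $K\geq K_0$, with $\gamma_\infty\leq\gamma_{K_0}<1/m$. Finally, since $\gamma_\infty<1/m$ means $m\gamma_\infty<1$, the second bullet is immediate: $\bar\Omega=\sum_{k=0}^\infty \gamma_\infty^k/q_k = b_0\sum_{k=0}^\infty (m\gamma_\infty)^k - \tfrac12\sum_{k=0}^\infty \gamma_\infty^k$ is a difference of convergent geometric series, hence finite. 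This is exactly the step Kumar \etal\ took for granted, so establishing $\gamma<1/m$ \emph{first} is the whole point.

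The step I expect to be the main obstacle is the transfer in the second paragraph: the monotonicity of $\bar p$ in $K$ is trivial, but converting it into monotonicity of the \emph{implicitly defined} fixed point $\gamma_K$ genuinely relies on the uniqueness of the solution to \eqref{eq:fixedpoint}--\eqref{eq:fixedpointg}, without which the sign-crossing argument collapses. The cleanest route is therefore to lean on the uniqueness results quoted in Section \ref{sec:revisited}; everything else (the ratio inequality and the linear growth of the denominator at $\gamma=1/m$) is routine computation.
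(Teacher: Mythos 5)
Your proof is correct, and its skeleton is the same as the paper's: show the fixed point $\gamma_K$ of \eqref{eq:fixedpoint}--\eqref{eq:fixedpointg} is decreasing in $K$, show it falls below $1/m$ at some finite $K_0$, and read off $\bar\Omega<\infty$ from the geometric series \eqref{eq:totalmean}. Where you genuinely depart is the pointwise monotonicity of $\bar p$ in $K$. The paper proves $\mathbf{P}(K_0)\geq\mathbf{P}(K_0+1)$ by ``some manipulation and some intricate factorization,'' exhibiting the difference explicitly as
\[
 \frac{b_0 \gamma^{K_0+1}\sum_{k=0}^{K_0}\gamma^{k}\left(m^{K_0+1}-m^{k}\right)}{\left\{\sum_{k=0}^{K_0}\left(b_0 m^k-\frac{1}{2}\right)\gamma^k\right\}\left\{\sum_{k=0}^{K_0+1}\left(b_0 m^k-\frac{1}{2}\right)\gamma^k\right\}}>0,
\]
whereas your mediant-inequality observation --- the appended term has ratio $q_{K+1}$, which lies strictly below the running ratio $\bar p(\gamma,K)$ because $q_k$ is strictly decreasing under A.3 --- reaches the same conclusion in one line, and for \emph{all} $\gamma\in(0,1)$ rather than only $\gamma<1/m$; it also lets you prove monotonicity before locating $K_0$, instead of after, as the paper does. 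Your threshold step (evaluating $F(1/m,K)$ and noting $\sum_{k=0}^{K}(b_0-\frac{1}{2}m^{-k})$ grows linearly while the numerator stays bounded) is the same fact the paper packages as a contradiction ($\gamma\geq 1/m$ forces $\bar p<\epsilon$, hence $\gamma\leq(N-1)\bar p<(N-1)\epsilon$); the two are interchangeable.

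One correction on the step you flag as the main obstacle: the uniqueness you need should \emph{not} be routed through the mean-field/decoupling results of Section \ref{sec:revisited}, since those (in \cite{refChoValidity}) are conditional on $q_k\leq 1/N$. The paper instead invokes \cite[Lemma 5.1]{refKumar07}, by which $\mathbf{P}(K)$ is nonincreasing in $\gamma$; hence $F(\gamma,K)=g(\gamma,K)-\gamma$ is strictly decreasing in $\gamma$, which delivers unconditionally both the uniqueness of $\gamma_K$ and the sign pattern ($F>0$ left of the root, $F<0$ right of it) on which your transfer argument rests. With that substitution your proof is complete; the only remaining elision --- identifying $\lim_{K}\gamma_K$ with the solution of the $K=\infty$ fixed point before applying \eqref{eq:totalmean} --- is shared by the paper's own proof, so it is not a gap relative to it.
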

\begin{proof}
Suppose $\gamma \geq 1/{m}$. Then we have from \eqref{eq:fixedpoint} and $q_{k} = 2 / (2 b_0 m^k -1)$ that, for any $\epsilon>0 $, there exists $K_1$ such that $\bar p < \epsilon$ for all $K \geq K_1 $. In the meantime, from $1 - {\rm e}^{-x} \leq x$, we also have $\gamma \leq (N-1)\bar p < (N-1)\epsilon$. This contradicts $\gamma \geq {1}/{m}$, implying that there must exist $K_0$ such that $\gamma < {1}/{m}$ for $K=K_0$.

Denote the right-hand side of \eqref{eq:fixedpoint} by $\mathbf{P}(K)$. Since the right-hand side of \eqref{eq:fixedpointg} is increasing in $\bar p$ and $\mathbf{P}(K)$ is nonincreasing in $\gamma$ from \cite[Lemma 5.1]{refKumar07}, $1 - {\rm e}^{- (N-1){\mathbf{P}(K)}}$ is nonincreasing in $\gamma$. Therefore, it {\it suffices} to show that $$1 - {\rm e}^{- (N-1){\mathbf{P}(K_0+1)}} \leq 1 - {\rm e}^{- (N-1){\mathbf{P}(K_0)}},$$ or equivalently $\mathbf{P}(K_0 +1) \leq \mathbf{P}(K_0)$, for all $\gamma < 1/m$. After some manipulation and some intricate factorization, it can be verified that $\mathbf{P}(K_0) - \mathbf{P}(K_0+1)$ takes the form:
\begin{equation}\nonumber
 \frac{b_0 \gamma^{K_0+1}  \sum_{k=0}^{K_0} \gamma^{k} \left( m^{K_0+1} - m^k  \right) }{\left\{ \sum_{k=0}^{K_0} \left( b_0 m^k-\frac{1}{2}\right) \gamma^k \right\} \left\{ \sum_{k=0}^{K_0+1} \left( b_0 m^k-\frac{1}{2}\right) \gamma^k \right\}}
\end{equation}
which is greater than zero for $m>1$, implying that the solution $\gamma^*$ of \eqref{eq:fixedpoint} and \eqref{eq:fixedpointg} for $K=K_0+1 $ is smaller than that for $K=K_0$.
Applying mathematical induction completes the proof. Also note that this implies $\gamma < 1/m$ for any $K \geq K_0$.

For the case $K=\infty$, since we have shown that $\gamma<1/m$ is decreasing in $K$ for all $K \geq K_0$, it follows from \cite[Theorem 3.14]{refAnalysis} that as $K$ goes to infinity, $\gamma$ should converge to $\hat\gamma < 1/m$. The existence of $\expectation[ \Omega]$ follows from \eqref{eq:totalmean}.
\end{proof}

Since $m>1$ guarantees that there exists $K_0$ such that $\gamma < 1/m$ for all $K\geq K_0$, it can be seen from \eqref{eq:phiknew} that, for the case of $K=\infty$, $ m>1 $ is also a sufficient condition for the existence of $\kappa$ such that $\phi_k > \phi_{k+1}$ for all $k \geq \kappa $, \ie, the average number of nodes in backoff stage $k$ is larger than that in backoff stage $k+1$. This corresponds to the tightness condition of $\phi_k$, which prevents a node from escaping to infinite backoff stage \cite{refBillConvergence}. 
The fact that the condition $m>1$ prevents a node from escaping to infinite backoff stage appears to be in best agreement with our usual intuition.

\subsection{Essential Assumption}\label{sec:essass}
To establish Poisson limit result in Theorem \ref{th:poi} and to justify point process approach in the remaining sections, we need the following essential assumption.
\begin{compactenum}[\bfseries{A}.\arabic{acounter}]
\item\addtocounter{acounter}{1} Per-stage backoff distribution $f_k(\cdot)$ is a {\it uniform} {\it continuous} function. It also means $v_k=1/\sqrt3$.
\end{compactenum}
Recall that $f_\Omega(\cdot)$ is expressed by \eqref{eq:totalpdf}, hence now it is a weighted sum of convolutions of continuous pdfs $f_k(\cdot)$ where the weight for each convolution function $f^{*k}(\cdot) = ( f_0* \cdots *f_k )(\cdot)$ is $(1-\gamma) \gamma^k$, which is a function of $\gamma$. As we noted in Remark \ref{rem:perpacket}, $f_\Omega(\cdot)$ reflects the collision effect through $\gamma$ which determines how much $ f_\Omega (\cdot)$ is dispersed.

\noindent {\bf On continuity assumption}: Denote by $D^n(t)$ the number of cumulative per-node successful transmissions until time-slot $t$. Formally, $D^n(t)$ is {\it discrete-time} renewal process that counts the number of arrivals during the interval $[0,t]$ where the inter-arrival times are i.i.d. copies of discrete random variable $\Omega$. Consider superposition process $D(t) \defeq \sum_{n=1}^N D^n(t)$. A {\it subtlety} in 802.11 is that there may be no intervening backoff time-slot between two consecutive successful transmissions. More precisely, at the beginning of a backoff time-slot, if the transmission attempts of nodes lead to a successful transmission, the time-slot is rendered unused, meaning that the time-slot is reused after the successful transmission. The same subtlety applies to collision events. Simply suppose the probability that a successful transmission (or a collision event) occurs at the beginning of a time-slot converges to $P_S$ (or $P_C$) as $N\to\infty$. Putting $$\textstyle P(x)\defeq  \probability[\lim_{N\to\infty} D(t+1)-D(t)=x],~~x\in\{0,1,\cdots\},$$
we can see from the subtlety that
\begin{align}
\textstyle P(x+1) = P(x) \cdot \sum_{i=0}^\infty P_C^i P_S= \frac{P_S}{1- P_C} P(x) . \nonumber
\end{align}
Because $\sum_{x=0}^{\infty} P(x) = 1$, we have a geometric distribution
\begin{align}
\textstyle P(x)  =  \left(1 - \frac{P_S}{1- P_C} \right) \left( \frac{P_S}{1- P_C} \right)^x ,~~x\in\{0,1,\cdots\} \nonumber
\end{align}
hence the limiting (as $N\to\infty$) distribution of cumulative process $D(t)$ for arbitrary integer $t$ takes a {\bf Pascal} (negative binomial) distribution\footnotemark\footnotetext{Sakurai and Vu \cite[Section III-B]{refSakuraiDelay} assumed $D(t)$ is a Bernoulli process. This simplification was justified by the reduced contention effect \cite{refReduced}.}. This fact can be exploited for a more accurate approximation. A simpler approximation at the cost of accuracy is to be presented in Theorem \ref{th:poi}.


Once again, the continuity assumption turns out unavoidable in Section \ref{sec:asymptotic} because regular variation theory \cite{refRegularBingham} exploited by Theorem \ref{th:powertail} is not well developed for discrete functions.
The uniform distribution assumption of $f_k(\cdot)$ was made only to simplify the exposition of Theorems \ref{th:heavy} and \ref{th:powertail} in Section \ref{sec:asymptotic}.

\subsection{Poisson Process Approximation}\label{sec:poissonapprox}

We can now view the backoff procedure of node $n$ as a stationary simple {\it renewal process} $A^n(t)$ that counts the number of arrivals during the interval $(0,t]$ where the $j$th inter-arrival times, $T_j^n - T_{j-1}^n$, are given by the i.i.d. copies of the continuous random variable $\Omega$. Then the backoff procedure of all nodes can be regarded as a {\it superposition} of $N$ statistically identical renewal processes, \ie,
\begin{equation}
\textstyle A(t) \defeq\sum_{n=1}^{N} A^n(t). \nonumber
\end{equation}
It should be remarked that, if one or more component processes are not Poisson, the superposition process $A(t)$ is {\it not renewal}, and even if the inter-arrival times of $A(t)$ are identically distributed, they are {\it not independent} \cite{refAlbinOR}.

In the following, we present a novel way to tackle this analytical difficulty caused by the dependence among the inter-arrival times of the superposition process. The key observation is that the {\it entropy} of the superposition point process $A(t)$ increases with $N$, which is implied by the following known result \cite[Proposition 11.2.VI]{refPointProcess}.

\begin{lem}[Poisson Limit for Superposition]  \label{lem:poissonlimit}  \nextline
Let $\Xi(t)$ denote the point process obtained by superposing $M$ {\rmfamily\bfseries independent} replicates $B^m(t)$, $m\in\{1,\cdots,m\}$, of a simple
stationary point process  with intensity $\lambda$ and dilating the time-scale by a factor $M$. Formally speaking,
\begin{align}\label{eq:suplimit}
\textstyle \Xi(t) = \sum_{m=1}^M B^m(t/M).
\end{align} Then as $M \to \infty$, $\Xi(t)$ converges weakly to a Poisson process with the intensity $\lambda$.
\end{lem}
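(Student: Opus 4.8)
The plan is to prove the weak convergence through \emph{Laplace functionals}, which is the most economical route and avoids arguing asymptotic independence of the interval counts by hand. Recall the continuity theorem \cite{refPointProcess}: a sequence of point processes converges weakly to a limit if and only if their Laplace functionals $L[f]=\expectation[\exp(-\int f\,\ud N)]$ converge pointwise over nonnegative test functions $f$, and that a Poisson process on the line with constant intensity $\lambda$ is characterized by $L[f]=\exp(-\lambda\int(1-e^{-f(u)})\,\ud u)$. Thus the whole problem reduces to computing $\lim_{M\to\infty}L_{\Xi}[f]$ and matching it to this expression.

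First I would exploit independence. Since the $M$ replicates are i.i.d.\ and $\Xi$ superposes their $M$-dilations (a point of $B^m$ at original time $s$ appears in $\Xi$ at time $Ms$), the Laplace functional factorizes as
\[
L_{\Xi}[f] = \left(\expectation\Big[\exp\Big(-\int f(Ms)\,\ud B(s)\Big)\Big]\right)^M =: \big(L_B[f(M\cdot)]\big)^M,
\]
where $B$ is a generic copy. The test function $f(M\cdot)$ is the original $f$ compressed toward the origin onto a window of length $O(1/M)$, so the task becomes a purely local, small-window analysis of a single stationary process near time $0$.

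The core estimate is to expand $L_B[f(M\cdot)]$ to first order in $1/M$. Writing $Z=\int f(Ms)\,\ud B(s)$ and using $e^{-Z}=1-(1-e^{-Z})$, I would condition on the number of points of $B$ in the window $[0,a/M]$, where $[0,a]\supseteq\operatorname{supp}f$. The zero-point event contributes nothing; the event of exactly one point at $s$ contributes $1-e^{-f(Ms)}$; and here is where \emph{simplicity} enters, because for a simple stationary point process the probability of two or more points in an interval of length $\epsilon$ is $o(\epsilon)$ (orderliness, via Korolyuk--Dobrushin), so the multiple-point event contributes $o(1/M)$. Replacing the ``exactly one point'' expectation by the first-moment (Campbell) measure $\lambda\,\ud s$ up to the same $o(1/M)$ error and substituting $u=Ms$ gives
\[
L_B[f(M\cdot)] = 1 - \frac{\lambda}{M}\int_0^a\big(1-e^{-f(u)}\big)\,\ud u + o(1/M).
\]
Raising to the $M$th power and letting $M\to\infty$ then yields $L_{\Xi}[f]\to \exp(-\lambda\int_0^\infty(1-e^{-f(u)})\,\ud u)$, which is precisely the Poisson$(\lambda)$ Laplace functional, completing the argument.

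The main obstacle is exactly this local single-point estimate: rigorously establishing that the two-or-more-points probability is $o(1/M)$ and that $\lambda$ is the exact first-order rate of a single point. This is the content of orderliness of a simple stationary point process together with Khinchin's existence theorem for the intensity; it is where the simplicity hypothesis is indispensable and must be invoked carefully, since a generic stationary process could place mass on coincident points and break the Poisson conclusion. Everything else---the factorization, the change of variables, and the passage $(1-x/M)^M\to e^{-x}$---is routine once this estimate is secured.
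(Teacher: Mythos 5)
Your proof is correct, but note first what the paper actually does: it does not prove this lemma at all --- it imports it verbatim as a known result, citing \cite[Proposition 11.2.VI]{refPointProcess}, and all the paper's own work goes into the surrounding reduction (showing $\bar\Omega=\Theta(N)$ puts the 802.11 superposition into the form \eqref{eq:suplimit}). What you have written is, in essence, the standard textbook proof of that cited proposition: the continuity theorem for Laplace functionals, factorization over the $M$ i.i.d.\ dilated replicates, and a first-order local expansion $L_B[f(M\cdot)]=1-\frac{\lambda}{M}\int(1-e^{-f(u)})\,\ud u+o(1/M)$ whose validity rests on orderliness of a simple stationary process with finite intensity. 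Your route buys a self-contained argument that makes visible exactly where each hypothesis is used --- simplicity via Korolyuk--Dobrushin, finite intensity via Campbell/Khinchin --- which the paper's citation hides; the paper's route buys brevity and defers the measure-theoretic care to the reference.

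One step deserves slightly more care than your sketch gives it. Replacing the ``exactly one point'' expectation by the Campbell integral requires controlling not just $\probability[N(0,\epsilon]\geq 2]=o(\epsilon)$ but the mean mass on the multiple-point event, i.e.\ $\expectation\bigl[N(0,\epsilon]\,\mathbf{1}_{\{N(0,\epsilon]\geq 2\}}\bigr]=o(\epsilon)$ with $\epsilon=a/M$, since the discarded correction is bounded by that quantity rather than by the bare probability. This is not a real obstruction: writing $\expectation[N\mathbf{1}_{\{N\geq2\}}]=\expectation[N]-\probability[N=1]$ and invoking $\expectation[N(0,\epsilon]]=\lambda\epsilon$ together with Korolyuk ($\probability[N\geq1]=\lambda\epsilon+o(\epsilon)$) and Dobrushin ($\probability[N\geq2]=o(\epsilon)$) gives exactly the needed $o(\epsilon)$, so your error term survives; but as stated your proposal bounds the wrong quantity, and a referee would ask you to spell this out. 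With that repair, and the trivial adjustment that $\operatorname{supp} f$ may sit anywhere in $\R$ (stationarity handles this), the argument is complete and matches the conclusion of the lemma, including the identification of the limiting intensity as $\lambda$.
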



Now it follows from the basic property \cite[Theorem 7.2]{refKumar07} for $K=\infty$ that the mean inter-arrival time of $A^n(t)$, $\bar \Omega$, is of order $N$. Therefore, there must exist a point process $$B^n(t) \defeq \lim_{N\to\infty} A^n(Nt) ~\mbox{with {intensity} }\lambda = \lim_{N\to\infty} N/\bar\Omega$$ where intensity $\lambda$ does {\bf not scale} with $N$ and we have $B^n(t/N) \approx A^n(t)$ as $N$ goes to $\infty$. This in turn implies
$$ \textstyle  \sum_{n=1}^{N} A^n(t )  ~\approx~  \sum_{n=1}^N B^n(t/N)
$$
which has the same form of \eqref{eq:suplimit}. Applying Lemma \ref{lem:poissonlimit} to the above equation leads to the following theorem.
\begin{theorem}[Dichotomy of Aggregation: First Part] \label{th:poi}  \nextline
Suppose $\bar\Omega=\Theta(N)$. Then the superposition process $\sum_{n=1}^{N} A^n(t )$ converges weakly to a Poisson process as $N\to\infty$.
\end{theorem}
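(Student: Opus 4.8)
The plan is to recast the superposition $\sum_{n=1}^N A^n(t)$ into exactly the form \eqref{eq:suplimit} demanded by Lemma~\ref{lem:poissonlimit} and then invoke that lemma with $M=N$. The device is a dilation of the time axis by the factor $N$. Since each $A^n$ is a stationary renewal process with mean inter-arrival time $\bar\Omega$, the rescaled process $A^n(N\,\cdot)$ has intensity $N/\bar\Omega$. The hypothesis $\bar\Omega=\Theta(N)$ is precisely what guarantees that this rescaled intensity has a finite, \emph{$N$-independent} limit $\lambda=\lim_{N\to\infty}N/\bar\Omega$, matching the requirement in Lemma~\ref{lem:poissonlimit} that the replicated process carry a fixed intensity $\lambda$. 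Writing $A^n(t)=A^n\!\bigl(N\cdot(t/N)\bigr)$ and setting $B^n(t)\defeq\lim_{N\to\infty}A^n(Nt)$, I obtain $A^n(t)\approx B^n(t/N)$, so that $\sum_{n=1}^N A^n(t)\approx\sum_{n=1}^N B^n(t/N)$, which is \eqref{eq:suplimit}.

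Before applying the lemma I would verify its remaining hypotheses. The component processes are \emph{independent} across nodes (the inter-arrival times of distinct nodes are drawn from independent i.i.d. copies of $\Omega$), and each is a \emph{stationary} renewal process by construction. They are also \emph{simple}: the continuity assumption \textbf{A.4} makes $\Omega$ a continuous random variable, so within a single node two renewals almost surely never coincide, and---because the $N$ components are independent with continuous inter-arrival laws---arrivals of distinct nodes also coincide with probability zero. Hence each $B^n$ is a simple stationary point process of intensity $\lambda$, and Lemma~\ref{lem:poissonlimit} yields weak convergence of $\sum_{n=1}^N B^n(t/N)$ to a Poisson process of intensity $\lambda$ as $N\to\infty$.

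The delicate point---and the step I expect to be the main obstacle---is that the summand $A^n(Nt)$ depends on $N$ in \emph{two} ways: through the number of terms in the superposition, and through the law of $\Omega$ itself, since $\gamma$ and the $q_k$ (hence $f_\Omega$ through \eqref{eq:totalpdf}) are functions of $N$. Lemma~\ref{lem:poissonlimit}, by contrast, superposes $M$ copies of a \emph{fixed} process $B^m$. Thus replacing the pre-limit component $A^n(Nt)$ by its limit $B^n(t)$ and only afterward summing is an interchange of two limits, and the approximation $A^n(t)\approx B^n(t/N)$ must be controlled uniformly in $n$. The clean way to discharge this is to appeal to the triangular-array (null-array) form of the superposition theorem: it suffices that each component be \emph{uniformly asymptotically negligible}, i.e. that the probability of node $n$ producing an arrival in any fixed bounded interval be $O(1/N)\to 0$, and that the total intensity converge to the finite limit $\lambda$. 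The time-dilation by $N$ renders this per-component sparsity transparent---each rescaled component fires at rate $\Theta(1/N)$---and, together with independence across $n$, this negligibility is exactly the condition that forces the Poisson limit. Establishing this uniform negligibility rigorously, rather than the formal manipulation that produced \eqref{eq:suplimit}, is where the real work lies.
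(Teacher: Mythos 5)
Your proposal follows essentially the same route as the paper: dilate time by $N$, use $\bar\Omega=\Theta(N)$ to extract a limiting process $B^n(t)\defeq\lim_{N\to\infty}A^n(Nt)$ with $N$-independent intensity $\lambda=\lim_{N\to\infty}N/\bar\Omega$, rewrite $\sum_{n=1}^N A^n(t)\approx\sum_{n=1}^N B^n(t/N)$ in the form \eqref{eq:suplimit}, and invoke Lemma~\ref{lem:poissonlimit} with $M=N$. Your additional verification of the simplicity, stationarity, and independence hypotheses, and your flagging of the triangular-array subtlety (the law of $\Omega$ itself depends on $N$, whereas the lemma superposes fixed replicates), correctly identify a point the paper's own argument leaves at the same heuristic ``$\approx$'' level, so you are if anything slightly more careful than the original.
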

\begin{remark}\label{rem:poiss}
 This result states that the Poissonian nature is inherent in the backoff process of 802.11 and provides an answer to {\bfseries\textcolor{webmag}{Q3}}.
\begin{asparaenum}[\rmfamily\bfseries{R\ref{rem:poiss}}.1]
\item {\bf The reason we do not require ${K=\infty}$}: Recalling our discussion at the beginning of this section, we can see that
    $$ K=\infty ~\stackrel{\scriptstyle\mbox{\footnotesize\cite[Theorem 7.2]{refKumar07}}}{\Longrightarrow}~ \bar\Omega = \Theta(N) ~\stackrel{\scriptstyle\mbox{\footnotesize{Theorem \ref{th:poi}}}}{\Longrightarrow}~ \mbox{Poisson}.$$
    If we require $K=\infty$ instead of $\bar\Omega = \Theta(N)$, the above theorem would look simpler, but it would not be applicable for the case $K<\infty$. Even if $K$ is {\bf finite}, the crucial scaling condition $\bar\Omega = \Theta(N)$ holds for a wide range of $N$, as hinted by previous works (See the simulation result with a practical parameter set in \cite[Figures 2 and 5]{refSakuraiDelay}). However, for extremely large $N$, the scaling becomes $\bar\Omega = \Theta(1)$. 
\item From a different angle, the backoff procedure of 802.11 along with its setting $K=6$ is intentionally designed so that the {\it successful} attempt intensity of each node $1/\bar\Omega$ is kept being of the order of $1/N$ for a wide range of $N$, by allowing enough number of backoffs for each packet.
\end{asparaenum}
\end{remark}

\noindent {\bf What is the premise of Poisson limit?}: The question remains whether the approximation is precise even for $t=\infty$. As Whitt discussed in \cite[Chapter 9.8]{refWhitt}, the underlying assumption of the Poisson limit theorem (Lemma \ref{lem:poissonlimit}) is that $t$ is finite. In the meantime, the basic premise of the Poisson limit theorem is that the component process $A^n(t)$ should become sparse ($\bar\Omega = \Theta(N)$) \cite[pp.83]{refWhittSuper}. If we allow $t\to\infty$ at the same time as $N\to\infty$, $A^n(t)$ may not remain sparse. This is essentially why we {\it must} adopt an another approximation in Section \ref{sec:shortterm} where $t=\Theta(N)$. In the light of these points, the above theorem provides a natural approximation of the backoff processes on normal time-scale, as compared with the other approximation in Section \ref{sec:shortterm} on coarse time-scales.

\section{Asymptotic Analysis}\label{sec:asymptotic}

A stochastic process with infinite variance and self-similarity exhibits phenomena called {\it Noah effect} and {\it Joseph effect}, respectively, in Mandelbrot's terminology \cite{refWhitt,refFBMTaqqu}. Noah and Joseph effects refer to the biblical figures Noah, who experienced an extreme flood -- exceptionally large values -- and, Joseph, who experienced long periods of plenty and famine -- self-similarity or strong positive dependence. This section lifts the veil to discover these effects and to explain their influences on the backoff process in 802.11. We have not assumed $K=\infty$ because all results derived so far are applicable if either of finite and infinite $K$ is used (See Remark \ref{rem:poiss} also). However, all results derived in this section require $K=\infty$, hence we formally assume the following.
\begin{compactenum}[\bfseries{A}.\arabic{acounter}]
\item\addtocounter{acounter}{1} There are {\bf infinite} backoff stages, \ie, $K=\infty$.
\end{compactenum}

\subsection{Moment Analysis}

We introduce the notion of a wide-sense heavy-tailed distribution borrowed from \cite{refHeavyRolski}. We call a pdf $f(x)$ {\it wide-sense heavy-tailed} if its moment generating function is infinite, \ie, $$\textstyle \int_{0}^{\infty} {\rm e}^{tx} f(x) \ud x = \infty, ~ \forall t>0.$$
We now characterize the existence of all fractional moments of $\Omega$. Let us define $$\setlength{\fboxrule}{1pt}\fcolorbox{magenta}{white}{$\alpha \defeq - (\log\gamma)/{\log m}$}$$
  where $\alpha>1$ is satisfied by Lemma \ref{lem:meanexist}. Also it is remarkable that Sakurai and Vu \cite{refSakuraiDelay} established a similar result for integer moments. Note however that we {\it cannot} prove Theorem \ref{th:powertail} without the following extended result for fractional moments.

\begin{theorem}[Existence of Fractional Moments] \label{th:heavy}  \nextline
The per-packet backoff $\Omega$ has a wide-sense heavy-tailed distribution. In addition, its $c$th moment $\expectation[ \Omega^c ]$ is
 \begin{compactitem}
 \item
  {\bfseries infinite} if $c \geq \alpha $,
 \item and {\bfseries finite} if $0 \leq c < \alpha $.
 \end{compactitem}
\end{theorem}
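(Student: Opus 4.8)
The plan is to work directly from the compound representation $\Omega \defeq \sum_{k=0}^{\kappa} B_k$ in \eqref{eq:defbsigma} and to condition on the stopping index $\kappa$. Two structural facts drive everything. First, because collision events form an i.i.d. sequence that is independent of the generated backoff values, the index $\kappa$ carries no information about the magnitudes $B_{k'}$; hence, conditional on $\{\kappa = k\}$, the variables $B_0,\dots,B_k$ remain independent with $B_{k'}$ uniform of mean $1/q_{k'} = b_0 m^{k'} - \tfrac12$ and support of length $\Theta(m^{k'})$. Second, under the infinite-stage assumption $\probability[\kappa = k] = (1-\gamma)\gamma^k$ is geometric. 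Writing $\gamma = m^{-\alpha}$, the whole dichotomy reduces to the elementary criterion that the geometric series $\sum_{k} (m^{c}\gamma)^k = \sum_k m^{(c-\alpha)k}$ converges iff $m^{c}\gamma < 1$, i.e. iff $c < \alpha$.

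For the finite direction ($0 \le c < \alpha$) I would use a pathwise upper bound. Given $\kappa = k$, every summand obeys $B_{k'} \le 2b_0 m^{k'}$, so $\Omega \le 2b_0 \sum_{k'=0}^k m^{k'} \le \tfrac{2b_0 m}{m-1}\, m^k \defeq C_1 m^k$. Therefore $\expectation[\Omega^c \mid \kappa = k] \le C_1^{c} m^{ck}$ and
\begin{equation}\nonumber
\expectation[\Omega^c] = \sum_{k=0}^{\infty} \expectation[\Omega^c \mid \kappa = k]\,(1-\gamma)\gamma^k \le (1-\gamma)\,C_1^{c}\sum_{k=0}^{\infty}(m^{c}\gamma)^k < \infty,
\end{equation}
the series converging precisely because $m^{c}\gamma = m^{c-\alpha} < 1$.

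For the infinite direction ($c \ge \alpha$) I would use the matching lower bound $\Omega \ge B_\kappa$. Since, conditional on $\kappa = k$, the top backoff $B_k$ remains uniform on an interval of length $\Theta(m^k)$, we get $\expectation[\Omega^c \mid \kappa = k] \ge \expectation[B_k^c] \ge c_0\, m^{ck}$ for a constant $c_0 > 0$. Hence
\begin{equation}\nonumber
\expectation[\Omega^c] \ge c_0\,(1-\gamma)\sum_{k=0}^{\infty}(m^{c}\gamma)^k = \infty,
\end{equation}
because $m^{c}\gamma = m^{c-\alpha} \ge 1$ keeps the general term bounded away from $0$. The boundary case $c = \alpha$ (where $m^{c}\gamma = 1$) lands here, matching the claimed split at $c \ge \alpha$; note $\alpha > 1$ is guaranteed by Lemma \ref{lem:meanexist}, so the moment range is nonempty.

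For the wide-sense heavy-tailed claim, I would note that a finite moment generating function at some $t > 0$ would dominate $x^c$ for large $x$ and thus force every moment to be finite, contradicting the infiniteness of $\expectation[\Omega^c]$ already established for any integer $c \ge \alpha$ (e.g. $c = \lceil\alpha\rceil$); alternatively, lower-bounding $\expectation[{\rm e}^{t\Omega}] \ge \sum_k \expectation[{\rm e}^{t B_k}](1-\gamma)\gamma^k$ and using that $\expectation[{\rm e}^{t B_k}]$ grows doubly-exponentially in $k$ (as $B_k$ is uniform on an interval of length $\Theta(m^k)$) shows the general term fails to vanish, so the series diverges for every $t > 0$. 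The only genuinely delicate step is the conditioning claim: I must invoke the independence of the collision sequence from the backoff magnitudes to assert that the conditional law of each $B_{k'}$ given $\{\kappa = k\}$ is unchanged. Everything else is a routine comparison against a geometric series.
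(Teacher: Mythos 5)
Your proof is correct, and although it keeps the paper's overall skeleton --- conditioning on the stopping index $\kappa$ with $\probability[\kappa=k]=(1-\gamma)\gamma^k$ and reducing both directions to the geometric criterion $m^c\gamma \lessgtr 1$, i.e.\ $c \lessgtr \alpha$ --- your bounding technique is genuinely more elementary than the paper's. Where the paper invokes H\"older's inequality twice (once as $(\expectation[X])^c \le \expectation[X^c]$ for the lower bound, once as $\bigl(\sum_{k'=0}^k B_{k'}\bigr)^c \le (k+1)^{c-1}\sum_{k'=0}^k B_{k'}^c$ for the upper bound) and then disposes of the resulting polynomial-times-geometric series $\sum_k (k+1)^{c-1}(m^c\gamma)^k$ by d'Alembert's ratio test, you sandwich $\Omega$ pathwise between $B_\kappa$ and $\frac{2b_0 m}{m-1}\,m^\kappa$, which eliminates the polynomial factor and reduces both directions to pure geometric series. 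This buys two things: first, your finite-direction argument covers all $0\le c<\alpha$ uniformly, whereas the paper's H\"older step (with conjugate exponent $c/(c-1)$) requires $c>1$, so the paper treats only $1<c<\alpha$ explicitly and leaves $0\le c\le 1$ to be inferred from $\bar\Omega<\infty$; second, you make explicit the conditioning subtlety that the paper uses tacitly --- that the collision sequence is independent of the backoff magnitudes (assumption A.2), so the conditional law of $B_0,\dots,B_k$ given $\{\kappa=k\}$ is the unconditional one, which is exactly what licenses writing $\expectation[\Omega^c]=\sum_k \expectation\bigl[\bigl(\sum_{k'=0}^k B_{k'}\bigr)^c\bigr]\,\probability[\kappa=k]$ as the paper does without comment. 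Your wide-sense heavy-tail argument (a finite moment generating function at any $t>0$ would dominate $x^c$ and force all moments finite, contradicting $\expectation[\Omega^{\lceil\alpha\rceil}]=\infty$) is the rigorous contrapositive of the paper's one-line remark, and your boundary case $c=\alpha$, where $m^c\gamma=1$ keeps the terms bounded away from zero, lands on the correct side of the theorem's split, matching the paper.
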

\begin{proof}
First we note $\alpha = - (\log\gamma)/{\log m}$ is equivalent to $m^\alpha \gamma = 1$. It also follows from Lemma \ref{lem:meanexist} that $\alpha>1$. Letting $c$ be any real number such that $c \geq \alpha$, we have $m^c \gamma \geq 1$. Then the $c$th moment of $\Omega$, $\expectation[ \Omega^c ]$, can be computed as
\begin{align}
\textstyle  & \textstyle\sum_{k=0}^{\infty} \expectation \left[ \left( {\sum_{k'=0}^k B_{k'}} \right)^c   \right] \cdot \probability[\kappa = k] \nonumber \\
& \textstyle\geq \sum_{k=0}^{\infty} \left( \expectation \left[ {\sum_{k'=0}^k B_{k'}}  \right] \right)^c \cdot \probability[\kappa = k] \nonumber \\
& \textstyle= \sum_{k=0}^{\infty} \left( \sum_{k'=0}^k (b_0 m^{k'}-\frac{1}{2}) \right)^c \cdot \probability[\kappa = k]  \nonumber\\
& \textstyle\geq \sum_{k=0}^{\infty} \sum_{k'=0}^k (b_0 m^{k'}-\frac{1}{2})^c   \cdot \probability[\kappa = k] \nonumber \\ & \textstyle = \sum_{k=0}^{\infty} (b_0 m^{k}-\frac{1}{2})^c  \gamma^k \nonumber
\end{align}
where the first inequality holds by H\"older's inequality for expectations, \ie, $(\expectation[X])^c \leq \expectation[X^c]$, and the second inequality follows from $c > 1$. Hence, from the last expression, we have $\expectation[ \Omega^c ] \to\infty$ as $K \to \infty$. Note that $c$ is real. Since there exist infinite moments, $\Omega$ has a wide-sense heavy-tailed distribution. Now consider the $c$th moment for $1<c<\alpha$.
\begin{align}
& \textstyle\expectation \left[ \left( \sum_{k=0}^\kappa B_k \right)^c \right]  = \sum_{k=0}^{\infty} \expectation \left[ \left( {\sum_{k'=0}^k B_{k'}} \right)^c  \right] \cdot \probability[\kappa = k] \nonumber \\
& \leq \textstyle\sum_{k=0}^{\infty} \expectation \left[ (k+1)^{c-1} \sum_{k'=0}^k \left(  B_{k'} \right)^c  \right] \cdot \probability[\kappa = k] \label{eq:suffitemp1} \\
& = \textstyle\sum_{k=0}^{\infty}  (k+1)^{c-1} \sum_{k'=0}^k \frac{(2b_0 m^{k'}-1)^c}{(c+1)} \cdot \probability[\kappa = k] \label{eq:suffitemp2} \\
& \leq \textstyle\frac{(2b_0)^c}{(c+1)} \sum_{k=0}^{\infty}  (k+1)^{c-1} \frac{(m^c)^{k+1} - 1}{m^c-1} \cdot \probability[\kappa = k] \nonumber \\
& \leq \textstyle\frac{(2b_0)^c}{(c+1)}  \sum_{k=0}^{\infty}  (k+1)^{c-1} \frac{(m^c)^{k+1} \gamma^k}{m^c-1} \label{eq:suffitemp3} \\
& = \textstyle\frac{(2b_0 m)^c}{(c+1)(m^c-1)} \sum_{k=0}^{\infty}  (k+1)^{c-1} (m^c\gamma)^{k} \label{eq:suffitemp4}
\end{align}
where \eqref{eq:suffitemp1} can be obtained by applying original H\"older's inequality, \ie, $$\textstyle\left(\sum_{k'=0}^k 1\cdot b_{k'} \right) \leq \left(\sum_{k'=0}^k 1^{\frac{c}{c-1}}\right)^{\frac{c-1}{c}} \left( \sum_{k'=0}^k (b_{k'})^c \right)^{\frac{1}{c}}. $$ \eqref{eq:suffitemp2} can be verified by computing $\int b^c f_{k'}(b) \ud b$ where $f_{k'}(b)$ is a uniform pdf with mean $b_0 m^{k'} - 1/2$. \eqref{eq:suffitemp3} follows from $\probability[\kappa =k] \leq \gamma^k$. Then it suffices to show that d'Alembert's ratio of the series \eqref{eq:suffitemp4} is less than one. Recalling that $m^c \gamma < m^\alpha \gamma = 1$, we can see that
\begin{equation}
\lim_{k \to \infty} \frac{(k+2)^{c-1} (m^c \gamma)^{k+1}}{(k+1)^{c-1} (m^c \gamma)^{k}} = m^c \gamma < 1 \nonumber.
\end{equation}
This establishes \eqref{eq:suffitemp4} is finite for $K=\infty$, and completes the proof.
\end{proof}
 \ifthenelse{\boolean{longver}}{A bright spot in the misfortune is that $\alpha>1$ is guaranteed thanks to Lemma \ref{lem:meanexist} so that $\bar\Omega$ is always finite.}{}
\begin{remark}\label{remark:heavy}
[{Answer to \bfseries\textcolor{webmag}{Q4}}]
 This theorem reveals that $\Omega$ is wide-sense heavy-tailed in the sense that {\it not} all of its moments exist, as Sakurai and Vu \cite[Theorem 1]{refSakuraiDelay} first noted.  \ifthenelse{\boolean{longver}}{The {\it necessary} and {\it sufficient} condition for the existence of the moments of $\Omega$ paves the way for the role of the constant $\alpha =  -(\log\gamma)/{\log m}$ as a ramification point.}{}
\end{remark}
 As shown in Fig. \ref{fig:perpacket}, the variance $\sigma_\Omega^2$ in 802.11b is not very large\ifthenelse{\boolean{longver}}{($\leq (3 \bar\Omega)^2 $).}{.} Nevertheless, the statistics of $\Omega$ certainly contain precursors of infinite-variance distributions, as shown in the next section.

\subsection{Strict-Sense Heavy-Tailedness: Tauberian Insights}

Although there has been some work to prove the wide-sense heavy-tailedness of the delay or backoff duration \cite{refSakuraiDelay} and the power-law like behavior of access delays was {\it identified} only through simulations in a few works \cite{refSakuraiDelay,refTickooSS}, to the best of our knowledge, {\it none} of them proved that the delay or backoff duration has a power-law tail. This quite intuitive property has not been established mainly due to the theoretical difficulties underlining the proof. It is important to note that this theorem is a prerequisite for mathematical analysis of Noah effect, which implies strict-sense heavy-tailedness.

We would like to place particular emphasis on the following theorem for another reason. We note that some work \cite{refChaoticVeres,refFigTCPSelf} considered the question whether a {\bf single} long-lived TCP flow can generate traffic that exhibits long-range dependence (or, equivalently, asymptotical second-order self-similarity). It is significant that long-range dependence is a property which is {\it automatically} implied by heavy-tailed inter-arrival times \cite{refLipsky} for the single flow (or node) case, irrespective of the context. That is, even a renewal process (no correlation of inter-arrival times) with heavy-tail distributed inter-arrival times generates long-range dependence in the counting process.
In the light of this point, one do not need to conduct analyses of tremendous traffic traces if there is a solid mathematical work that can settle this kind of dispute.

In the following theorem, we prove that the per-packet backoff distribution has a power tail by lighting upon the fact that the moment generating function has a {\it recursive relation}, and by applying the theory of {\it regular variation} \cite{refRegularBingham} and the less-known {\it modified Tauberian theorem} of Bingham \& Doney \cite{refTauberian}. For your own good, note that this theorem requires only $K=\infty$, nothing about $N$.

\begin{theorem}[Power Tail Principle{\footnotemark}]\label{th:powertail}  \nextline The per-packet backoff $\Omega$ has a Pareto-type tail with an exponent of $-\alpha$. Formally,
\footnotetext{The proof in fact requires $\alpha$ to be not an integer. For the complicated case when $\alpha$ is an integer, we refer to \cite{refMeyer} and \cite[Theorem 8.1.6]{refRegularBingham}. However, since an integer $\alpha$ can be approximated for any small $\epsilon>0$ by a real number $\tilde\alpha$ such that $|\alpha - \tilde\alpha| < \epsilon$, we expect the result of Theorem \ref{th:powertail} to be valid for all $\alpha>0$.}
 \begin{equation}\setlength{\fboxrule}{1pt}
\fcolorbox{blue}{white}{$\displaystyle F_{\Omega}^c (x) \defeq \displaystyle \int_{x}^{\infty} f_{\Omega} (x) dx ~\sim~ x^{-\alpha}  \slow\left( x \right)$}. \label{eq:ccdfregular}
\end{equation}
The notation $f(x) \sim g(x)$ means $\lim_{x\to\infty} {f(x)}/{g(x)} = 1$, and $\slow\left( x \right)$ is slowly varying\footnotemark.
\end{theorem}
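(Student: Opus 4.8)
The plan is to pass to the Laplace--Stieltjes transform and exploit the self-similarity of $\Omega$. Write $\varphi(s) \defeq \expectation[{\rm e}^{-s\Omega}]$ for the transform of $\Omega$ (the counterpart of the moment generating function on the negative axis, which exists even though the MGF does not, by Theorem \ref{th:heavy}) and $\beta(s)\defeq\expectation[{\rm e}^{-sB_0}]$. Under Assumption~A the continuous uniform laws satisfy $B_k \disteq m^{k} B_0$ exactly, so I would decompose the packet's backoff at the very first stage: with probability $1-\gamma$ the packet succeeds and $\Omega = B_0$, while with probability $\gamma$ it collides and the residual $\sum_{k=1}^{\kappa} B_k$ is, after re-indexing and using the memorylessness of the geometric law of $\kappa$, an independent copy of $m\Omega$. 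This yields the functional equation
\[
\varphi(s) = \beta(s)\bigl[(1-\gamma) + \gamma\,\varphi(ms)\bigr], \qquad\text{i.e.}\qquad 1-\varphi(s) = \bigl(1-\beta(s)\bigr) + \gamma\,\beta(s)\bigl(1-\varphi(ms)\bigr).
\]
Since $B_0$ is bounded, $\beta$ is entire, so every analytic ingredient is under control and the only possible source of non-smooth behaviour at $s=0^+$ is the self-referential term $\varphi(ms)$.

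The crux is that the dilation $s\mapsto ms$ combined with the factor $\gamma$ selects the exponent $\alpha$. Seeking a scaling solution $1-\varphi(s)\approx c\,s^{\rho}$ of the homogeneous part forces $\gamma m^{\rho}=1$, i.e.\ $\rho=\alpha$ by the very definition $\alpha=-(\log\gamma)/\log m$. The linear term $\bar\Omega s$ is, by contrast, only a \emph{particular} solution driven by the forcing $1-\beta(s)$, and it is \emph{not} a homogeneous solution precisely because $\gamma m = m^{1-\alpha}<1$ (as $\alpha>1$ by Lemma \ref{lem:meanexist}). Concretely, let $n\defeq\lfloor\alpha\rfloor$; Theorem \ref{th:heavy} guarantees that the moments $\mu_0,\dots,\mu_n$ are finite, so $\varphi$ admits a Taylor polynomial $P_n(s)=\sum_{j=0}^{n}\frac{(-s)^{j}}{j!}\mu_j$, and I would study the remainder $R(s)\defeq(-1)^{n+1}\bigl[\varphi(s)-P_n(s)\bigr]$. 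Substituting into the functional equation and using that $\gamma m^{j}=m^{j-\alpha}<1$ for every integer $j\le n<\alpha$, all the polynomial cross-terms and the $\beta$-induced corrections are subdominant, and one extracts $R(s)=\gamma\,R(ms)+o(R(s))=m^{-\alpha}R(ms)(1+o(1))$, i.e.\ $R(ms)/R(s)\to m^{\alpha}$ along the dilation. Upgrading this single-scale relation to genuine regular variation, $R(s)=s^{\alpha}\slow(1/s)$, is where the complete monotonicity of $\varphi$ (hence ultimate monotonicity of the relevant derivative) is used to sandwich $R$ between successive scales.

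It then remains to invoke the modified Tauberian theorem of Bingham \& Doney (\cite{refTauberian}, \cite[Theorem~8.1.6]{refRegularBingham}), which, for non-integer $\alpha$, makes regular variation of the transform remainder $R$ at $0^{+}$ with index $\alpha$ equivalent to regular variation of the tail $F_\Omega^{c}(x)$ at $\infty$ with index $-\alpha$. This delivers $F_\Omega^{c}(x)\sim x^{-\alpha}\slow(x)$ with $\slow$ slowly varying, exactly as claimed, the exclusion of integer $\alpha$ being the content of the footnote.

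I expect the main obstacle to be the regular-variation step: turning the scaling ansatz into a rigorous statement that $R$ is genuinely regularly varying of index $\alpha$, and not merely $O(s^{\alpha})$, requires careful bookkeeping of every correction generated by $\beta(s)$ and by the polynomial $P_n$ under the map $s\mapsto ms$, and it requires ruling out the multiplicatively periodic (log-periodic) fluctuations that the bare dilation identity $R(ms)=m^{\alpha}R(s)$ would otherwise permit, since the scaling here is lattice in $\log m$. It is precisely at this point that the strict separation $\gamma m^{j}<1$ for $j\le n$, the smoothing provided by the non-lattice forcing $1-\beta(s)$, and the Tauberian theorem itself, rather than a mere Abelian estimate, become indispensable, and where the hypothesis $\alpha\notin\Z$ must be imposed.
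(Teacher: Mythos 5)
Your overall route coincides with the paper's: both pass to the Laplace transform, extract a dilation relation from the multiplicative backoff structure with the exponent pinned down by $\gamma m^{\alpha}=1$, invoke complete monotonicity (Bernstein) to get monotonicity of the relevant quantity, and finish with the Bingham--Doney modified Tauberian theorem \cite{refTauberian}, \cite[Theorem 8.1.6]{refRegularBingham}. Your choice to work with the Taylor remainder $R(s)$ rather than, as the paper does, with the $z$th derivative of the transform series ($z=\ceiltheta$) is an immaterial variant; both formulations of the modified Tauberian theorem are available.

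Two concrete problems remain. First, the claimed exact self-similarity $B_k \disteq m^k B_0$ is false under the standard parametrization $q_k = 2/(2b_0m^k-1)$: $B_k$ is uniform on $[0,\,2b_0m^k-1]$ while $m^kB_0$ is uniform on $[0,\,m^k(2b_0-1)]$, so the post-collision residual is \emph{not} an exact copy of $m\Omega$ and your functional equation holds only approximately. The paper confronts precisely this discrepancy: its recursion \eqref{eq:lstgx} introduces the perturbed series $H$ of \eqref{eq:lsttotalpdf3} (with $1/(2b_0 m)$ in place of $1/(2b_0)$), and a nontrivial portion of its Step 1 is devoted to proving that the ratio \eqref{eq:ratiodiff} of the divergent $z$th derivatives of $G$ and $H$ tends to $1$, by stripping off finitely many factors whose derivatives remain finite at $0^+$. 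This error is reparable within your ``subdominant corrections'' bookkeeping, but as written the equation is wrong. Second, and this is the genuine gap, you correctly identify the lattice/log-periodicity obstruction as the crux, but the mechanism you propose for resolving it --- smoothing by the non-lattice forcing $1-\beta(s)$ --- is not what the paper uses and is dubious on its face: the forcing drives only the particular solution, whereas log-periodic multiples of $s^{\alpha}$ satisfy the homogeneous dilation identity $R(ms)=m^{\alpha}R(s)$ regardless of the forcing, so neither it, nor the strict inequalities $\gamma m^{j}<1$, nor the Tauberian theorem itself (which \emph{presupposes} regular variation of $R$) rules them out. The paper's actual argument extends the quantifier set from the lattice $\M=\{m^i\}$ to the dense set $\Lset$ of $\lambda$ with $(\log\lambda)/\log m$ irrational: strict monotonicity in $y$ of $\Upsilon(y,x)$ (a consequence of complete monotonicity) provides uniform two-sided bounds, an irrational-rotation density argument with $\hat y = m^{-\lfloor j(\log\lambda)/\log m\rfloor}\lambda^{j}$ yields a contradiction unless the limit exponent along every $\lambda\in\Lset$ equals $\alpha$, and Karamata's theorem for monotone functions \cite[Theorem 1.10.2]{refRegularBingham} then converts the dense quantifier set into genuine regular variation. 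Without this (or an equivalent) argument, your sandwich between successive scales yields only $R(s)\asymp s^{\alpha}$ up to bounded multiplicatively periodic oscillation, which is strictly weaker than slow variation and leaves the Tauberian step inapplicable.
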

\footnotetext{A function $f(x)$ is called {\it regularly varying} \cite{refRegularBingham} at infinity of index $\rho$ iff $\lim_{x\to\infty} {f(\lambda x)}/{f(x)} = \lambda^\rho, \forall \lambda>0$. For the special case $\rho =0$, it is called {\it slowly varying} and usually denoted by $\slow(x)$. For example, a positive constant, $(\log x)^\epsilon$ for any real number $\epsilon$ is a slowly varying function. A slowly varying function $\slow(x)$ is dominated by any positive power function, \ie, $\lim_{x\to\infty} {\slow(x)}/{x^\epsilon}=0$, $\forall \epsilon>0$.}
\begin{remark}\label{rem:powertail}
This principle, formulated in terms of the ccdf $F_{\Omega}^c (\cdot)$, not only defines a fundamental characteristic of delay but also lays the groundwork for further analysis using regular variation theory.
\begin{asparaenum}[\rmfamily\bfseries{R\ref{rem:powertail}}.1]
\item {} [{Answer to \bfseries\textcolor{webmag}{Q4}}] This clear-cut and simple result reveals the statistical attribute of  $\Omega$ for any population $N$. It has a Pareto-type distribution whose exponent parameter is $-\alpha$. Theorem \ref{th:powertail} proves the {\bf strict-sense heavy-tailedness} of $\Omega$ for $\alpha < 2$, and puts an end to the discussions in Section \ref{sec:intro}.
\item This theorem dispenses the {\it complicated} convolution expression \eqref{eq:totalpdf} and leads us to a {\it simpler} conclusion. The most representative distribution of backoff times $\Omega$ is a truncated Pareto-type distribution (though it must be slowly-varying), rather than uniform or exponential as observed in the simulation studies of \cite{refBergerUniform, refMarkusExp}.
\item The simplistic term $\slow(\cdot)$ in \eqref{eq:ccdfregular} is {\it irreplaceable} with any other expressions, implying its pivotal role. For instance, Final Value Theorem tells nothing but $\lim_{x\to\infty} f_\Omega (x) = 0$.
\end{asparaenum}
\end{remark}

\begin{figure}[t!]
\centering
\centerline{\includegraphics[width=8cm]{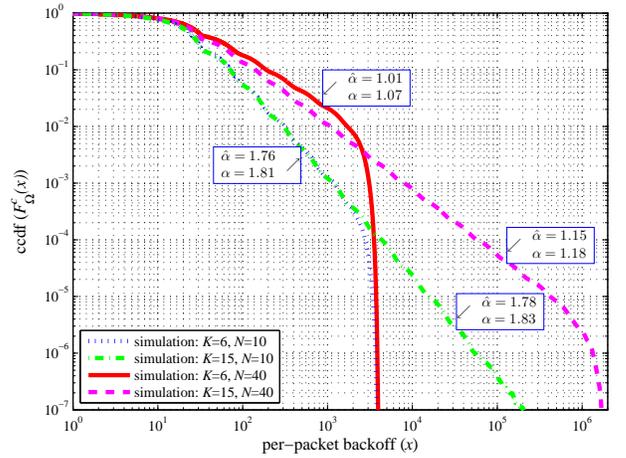}}
\caption{Complementary cumulative distribution function $F_\Omega^c (x) $ for $K=6,15$; and $N=10,40$.} \label{fig:power}
\end{figure}

The ccdf of $\Omega$ obtained through ns-2 simulations is plotted in Fig. \ref{fig:power} on a log-log scale where the estimated slopes $\hat\alpha$ are compared with the analytical formulae $\alpha = -(\log \gamma)/\log m$, \eqref{eq:fixedpoint} and \eqref{eq:fixedpointg}. Observe that these simple formulae along with \eqref{eq:ccdfregular} provide a precise estimate for the tail distribution. Remarkably, {\it even} for $K=6$, \ie, the value adopted in 802.11b, the ccdf of $\Omega$ can be accurately approximated by a truncated power-law tail.

\section{Short-Term Fairness Analysis}\label{sec:shortterm}

First of all, we cancel the assumption $K=\infty$ we made in Section \ref{sec:asymptotic} because we present in this section a new approximation for the superposition process and short-term fairness analysis, both of which will be applicable to both cases $K<\infty$ and $K=\infty$.

\subsection{Inter-Transmission Probability}\label{sec:interprobintro}

The notion of {\it short-term fairness} \cite{refKoksal,refBergerUniform,refMarkusExp}, defined as the distribution of successful transmissions of nodes for a {\it finite} time, has been getting the limelight due to its central role in quantifying the behavior of random access protocols over short time-scales and its close link to access delays.
Among the set of nodes $ \{ 1,\cdots,N \}$, we tag node $N$, without loss of generality. Assume that the tagged node successfully transmitted a packet at time $t=0$. Denote by $Z_n$ the number of packets successfully transmitted by node $n$ {\it while} the tagged node transmits $\zeta$ packets. Recalling that $A^n(t)$ counts the arrivals during the interval $(0,t]$, we can see $$Z_n \defeq A^n( t' ) ~\mbox{where}~ t' = \min \{t:A^N(t) =\zeta \} .$$ It is clear that $Z_N = \zeta$ from the above definition. For short-term fairness analysis, we consider $$ \setlength{\fboxrule}{1pt}\fcolorbox{magenta}{white}{$Z= \sum_{n=1}^{N-1} Z_n $ } . $$ For the sake of convenience, we denote $\probability [Z=z \vert Z_N = \zeta]$ by $\probability_N [z \vert \zeta]$. We call the conditional probability $\probability_N [z \vert \zeta]$ {\it inter-transmission probability}. In terms of the point processes $A^n(t)$, it is equivalent to
\begin{align}
\textstyle \probability_N [z \vert \zeta] = \probability \left[ \sum_{n=1}^{N-1} A^n\left(   \sum_{j=1}^{\zeta} \Omega_j \right)  = z \right] \nonumber
\end{align}
where $\Omega_j$ denotes the per-packet backoff for each $j$th packet of the tagged node $N$ and are i.i.d. copies of $\Omega$.

\subsection{Intermediate Telecom Process on Coarse Time Scales}\label{sec:itpapprox}
\noindent {\bf The premise does not hold}: Look into the above superposition process $\sum_{n=1}^{N-1} A^n (   t )$ where $t=\sum_{j=1}^{\zeta} \Omega_j$. Recall the basic premise of Poisson limit theorem (Lemma \ref{lem:poissonlimit}) is that each component process must become sparse as $N$ grows. It is easy to see that this premise does not hold any longer here because $t= \sum_{j=1}^{\zeta} \Omega_j $ is of order of $\zeta\cdot \bar\Omega$ in the sense that $\expectation[t] = \Theta(\zeta \bar\Omega)$ and $\bar\Omega $ is of order of $N$ in most cases (See Remark \ref{rem:poiss}). Therefore, we need a new approximation of the superposition process on {\bf coarse time-scales} such that $t = \Theta (N)$.

Before that, we epitomize {\it theory of stable law} \cite[Chapter 4]{refWhitt} briefly only for the case $\alpha \in (1,2]$. Denote by $\mathbf{S}_\alpha(\sigma,\beta,\mu)$ {\it L\'evy $\alpha$-stable laws} whose four parameters are: the {\it index} $\alpha$; the {\it scale} parameter $\sigma$; the {\it skewness} parameter $\beta$; and the mean $\mu$. If $X_1,\cdots,X_n$ are i.i.d. copies of $\mathbf{S}_\alpha(\sigma,\beta,\mu)$, they satisfy the {\it stability} property which takes the following form
\begin{equation}
\textstyle \sum_{i=1}^{m} (X_i  - \mu) \disteq m^{\frac{1}{\alpha}} (X_1 - \mu) \nonumber
\end{equation}
where the notation $\disteq$ means equality in distribution. The case $\alpha =2$ is singular because we have $\mathbf{S}_2(\sigma,\beta,\mu) = \mathbf{N}(\mu,2\sigma^2)$ where $\beta$ plays no role. However, for the rest of cases $\alpha \in (1,2)$, there is no closed form expression for its pdf.

 Since Leland \etal \cite{refLeland} created a wave of interest in the self-similarity in the Internet, the probabilistic community has been concerned with the limit processes of aggregate renewal processes under different limit regimes. Here a point at issue was the order of limit operations, \ie, $t\to\infty$ and $N\to\infty$. Recently, Kaj \etal \cite{refPoissonBridge,refLimitFractal,refConvFBM} have established a fundamental connection between Noah effect and Joseph effect, elucidating the above issue as well.

 \noindent {\bf Aggregate Process on Coarse Time Scales}: A premise of \cite[Theorem 1]{refLimitFractal} is that each component process should {\bf not} become sparse as $N$ grows, \ie, inter-arrival times not scaling with $N$. This premise is fully satisfied when we consider $A^n(\bar\Omega t)$ instead of $A^n(t)$. In other words, we now view $A^n(\tau)$ on coarse time-scales $\tau = \bar\Omega t$. Also note that $\expectation [A^n(\bar\Omega t)] = t$. Then applying \cite[Theorem 1]{refLimitFractal} yields to the following result which is applicable to various cases $K=\infty$, $K<\infty$, finite time (which must be large enough though), and infinite time.

 \begin{theorem}[Dichotomy of Aggregation: Second Part{\footnotemark}]\label{th:itp}
Suppose, for $K=\infty$, the inter-arrival times of $A^n(\bar\Omega t)$ has ccdf $F_{\Omega}^c (\bar\Omega x) $ in \eqref{eq:ccdfregular} \underline{which does not vary with $N$}. For $K<\infty$, nothing is assumed. Define the centred superposition process $$ \textstyle \widetilde{A}(t) \defeq \left\{ \sum_{n=1}^{N} A^n( \zeta\bar\Omega t ) \right\} - N \zeta t . $$
Then, as $\zeta \to \infty$ and $N \to \infty$, we have
 \begin{align}
 \frac{\widetilde{A}(t)}{\zeta}   & \stackrel{\mbox{\footnotesize{weakly}}}{\longrightarrow}  -c \cdot \mathbf{Y}_{\alpha} \left( \frac{ t}{c} \right), ~ \mbox{for}~K=\infty, ~\alpha\in(1,2), \label{eq:itpinfinity} \\
  \frac{\widetilde{A}(t)}{\sqrt{N \zeta}}  &  \stackrel{\mbox{\footnotesize{weakly}}}{\longrightarrow}  v_\Omega \cdot \mathbf{B}( t), ~   \left\{ \begin{array}{l} \mbox{for}~K<\infty,  \\ \mbox{for}~ K=\infty,~\alpha\in(2,\infty), \end{array} \right. \label{eq:itpfinite}
 \end{align}
 where the scaling constant $ c \defeq \{ N\bar\Omega^{-\alpha} \slow(\zeta\bar\Omega) \}^{1/(\alpha-1)} /\zeta $, $\mathbf{B}(\cdot)$ is a standard Brownian motion, and $\mathbf{Y}_{\alpha} (\cdot)$ belongs to the family of {\it Intermediate Telecom process} \cite{refConvFBM} of index $\alpha$ whose cgf takes the form
\begin{align}
& \textstyle \log \expectation \left[ {\rm e}^{\theta \mathbf{Y}_\alpha (\tau)} \right]   \textstyle = \frac{\tau^{1-\alpha}}{\alpha-1} \left( {\rm e}^{\theta \tau } - 1- \theta \tau \right)  \nonumber \\ & \textstyle + \int_0^\tau \left( {\rm e}^{\theta x } - 1- \theta x \right) \left( \alpha \tau x^{-\alpha-1} +(2-\alpha) x^{-\alpha} \right) \ud x . \label{eq:cgf}
\end{align}
\end{theorem}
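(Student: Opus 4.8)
The plan is to recognise $\widetilde{A}(t)$ as a centred superposition of $N$ i.i.d.\ renewal counting processes whose inter-arrival times are regularly varying of index $\alpha$, and then to invoke the fundamental limit theorem of Kaj \etal \cite[Theorem 1]{refLimitFractal}, which furnishes exactly the three-regime classification (L\'evy--stable, intermediate/Telecom, and fractional Brownian) according to how fast the source count grows relative to the time horizon. First I would standardise the sources: dividing each inter-arrival time by $\bar\Omega$ turns $A^n(\bar\Omega t)$ into a unit-rate renewal process, so that $\expectation[A^n(\bar\Omega t)]=t$ and the subtraction of $N\zeta t$ is precisely a centring. The key input is Theorem \ref{th:powertail}, which guarantees that the standardised inter-arrival tail is $F_\Omega^c(\bar\Omega x)\sim(\bar\Omega x)^{-\alpha}\slow(\bar\Omega x)$, i.e.\ regularly varying of index $-\alpha$; the dichotomy in $\alpha$ is then inherited directly from the moment dichotomy of Theorem \ref{th:heavy} (finite variance iff $\alpha>2$, and always finite for $K<\infty$).

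For the heavy-tailed branch ($K=\infty$, $\alpha\in(1,2)$) I would verify the hypotheses of \cite[Theorem 1]{refLimitFractal} and check that the joint growth of $N$ and $\zeta$ places us in the \emph{intermediate} regime, which is exactly the bookkeeping role of the constant $c=\{N\bar\Omega^{-\alpha}\slow(\zeta\bar\Omega)\}^{1/(\alpha-1)}/\zeta$: writing the normalisation in terms of $c$ collapses the balance between the source count $N$ and the time scale $\zeta$ into a single parameter, and the limit emerges as the time- and space-scaled Intermediate Telecom process $-c\,\mathbf{Y}_\alpha(t/c)$. The negative sign is not cosmetic: a long inter-arrival produces a \emph{deficit} of counts relative to the mean, so the counting process is skewed oppositely to the workload process for which $\mathbf{Y}_\alpha$ is classically defined. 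I would then confirm the identification of the limit by matching its cumulant generating function to \eqref{eq:cgf}, the first term being the short-jump (compound-Poisson) contribution and the integral the long-range contribution.

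For the finite-variance branch ($K<\infty$, or $K=\infty$ with $\alpha\in(2,\infty)$) the argument reduces to a functional central limit theorem. By renewal theory a single standardised process satisfies $\mathrm{Var}\,A^n(\zeta\bar\Omega t)\sim \zeta t\,\sigma_\Omega^2/\bar\Omega^2=\zeta t\,v_\Omega^2$ as $\zeta\to\infty$; summing $N$ independent copies multiplies the variance by $N$, so $\widetilde{A}(t)/\sqrt{N\zeta}$ has variance tending to $v_\Omega^2 t$, which identifies the limit as $v_\Omega\,\mathbf{B}(t)$. Tightness and finite-dimensional convergence again follow from \cite[Theorem 1]{refLimitFractal} in its Gaussian regime, the relevant precondition being merely $\sigma_\Omega^2<\infty$, supplied by Theorem \ref{th:heavy}.

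The hard part will be verifying the precondition of \cite[Theorem 1]{refLimitFractal} in the intermediate regime and controlling the double limit $\zeta\to\infty$, $N\to\infty$: one must ensure the regularly varying tail with its slowly varying factor $\slow$ meets the uniform-convergence requirements of the theorem, and pin down the exact joint rate at which $N$ and $\zeta$ must grow so that $c$ stays bounded away from $0$ and $\infty$ (otherwise one drifts into the stable-L\'evy or fractional-Brownian regimes rather than the Telecom one). The boundary index $\alpha=2$ and, more delicately, the integer values of $\alpha$ flagged in the footnote to Theorem \ref{th:powertail} would have to be handled by the approximation argument indicated there.
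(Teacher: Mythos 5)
Your proposal follows essentially the same route as the paper: standardise each $A^n(\bar\Omega t)$ to a unit-rate renewal process so that $F_\Omega^c(\bar\Omega x)$ is regularly varying of index $-\alpha$ with a slowly varying factor that (by the underlined assumption) does not scale with $N$, invoke \cite[Theorem 1]{refLimitFractal} for the $K=\infty$, $\alpha\in(1,2)$ branch, and reduce the remaining cases to a finite-variance superposition FCLT justified by $\expectation[\Omega^2]<\infty$ from Theorem \ref{th:heavy}. The only immaterial difference is in the finite-variance branch, where the paper cites the ON/OFF source analysis of \cite[Section 2.1.3(a)]{refFBMTaqqu} (with $\mu_1=1$, $\mu_2\gg1$, noting the order of limits is irrelevant there) while you argue via the renewal-theoretic variance asymptotics $\mathrm{Var}\,A^n(\zeta\bar\Omega t)\sim \zeta t\, v_\Omega^2$; both correctly yield \eqref{eq:itpfinite}.
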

\footnotetext{Consistency between \eqref{eq:itpfinite} and Theorem \ref{th:poi}: Suppose $K=\infty$ and $\alpha \in (2, \infty)$ (which is very unlikely as $N$ must be large). Then assume the superposition process $A(\zeta\bar\Omega t)$ is {\bf Poisson}. For large $\zeta \bar\Omega$, this Poisson process should have a Gaussian marginal distribution with mean $ N \zeta t$ and variance $N \zeta t$, whereas the process \eqref{eq:itpfinite} has mean $N\zeta t$ and variance $v_\Omega^2 N \zeta t$. Therefore, Theorem \ref{th:poi} is inconsistent with \eqref{eq:itpfinite} for $v_\Omega \neq 1$. The inconsistency is due to the premise of Theorem \ref{th:poi}, \ie, finite time. A similar remark is given in \cite[Remark 9.8.1]{refWhitt}.\label{foot:inc}}
 \begin{proof}
First, for $K=\infty$, the ccdf of inter-arrival times of $A^n(\bar\Omega t)$ now satisfies $ F_{\Omega}^c (\bar\Omega x) \sim x^{-\alpha} \bar\Omega^{-\alpha}  \slow\left( \bar\Omega x \right) $ due to its scaling. From $\expectation [A^n(\bar\Omega t)] = t$, the mean inter-arrival time is {\bf one}. It follows from the underlined assumption that $\bar\Omega^{-\alpha}  \slow\left( \bar\Omega x \right)$ does not scale with $N$ and it is a slowly-varying function of $x$. Applying \cite[Theorem 1]{refLimitFractal} yields that $\widetilde{A}(t)/\zeta$ weakly converges to the process in \eqref{eq:itpinfinity}.

For the rest of cases, (i) $K<\infty$ and (ii) $K=\infty$ and $\alpha\in(2,\infty)$, we do not need any assumption because $\expectation[ \Omega^2]<\infty$ holds both for (i) and (ii) by appealing to Theorem \ref{th:heavy}. These finite variance cases were analyzed in \cite[Section 2.1.3(a)]{refFBMTaqqu} whose `ON/OFF source model' reduces to our model if we use $\mu_1 =1$ and $\mu_2 \gg 1$. Remark that it is discussed in \cite[Section 2.3]{refFBMTaqqu} that the order of limit operations does not matter in these cases.
 \end{proof}

The phrase `as $\zeta \to \infty$ and $N \to \infty$' is pregnant with meaning. The fundamental strength of the above theorem for the case of \eqref{eq:itpinfinity} is in that its result is not subject to the order of limit operations. Instead, the {\it scaling structure} between $\zeta $ and $N$, represented by $c$, determines the kind of the approximation in the sense that, as $c\to0$ and $c\to\infty$, $c^{1/\alpha} \mathbf{Y}_{\alpha} ( \frac{ t}{c} ) $ and $c^{H} \mathbf{Y}_{\alpha} ( \frac{ t}{c} ) $ respectively converges to $\mathbf{\Lambda}_\alpha(t)$ ($\alpha$-stable L\'evy motion) and $\mathbf{B}_H(t)$ (fractional Brownian motion of index $H=(3-\alpha)/2$), up to constants \cite{refPoissonBridge}. For finite $c \in(0,\infty)$, $\mathbf{Y}_{\alpha} ( \frac{ t}{c} ) $ becomes an in-between process. For the case of \eqref{eq:itpfinite}, even this scaling structure does not matter.

It is significant that $c\to0$ and $c\to\infty$ respectively equivalent to $\lim_{N\to\infty}\lim_{\zeta\to\infty}$ and $\lim_{\zeta\to\infty}\lim_{N\to\infty}$ in the literature. Therefore, the essence of the advance \cite[Theorem 1]{refLimitFractal} is that it has {\bf emancipated} the limit form of the superposition process {\it from the order of the two limit operations}, widening the applicability of the theory.
\begin{remark}\label{rem:itp}
Though, for $K=\infty$, the underlined phrase makes a strong assumption which is not reasonable in view of $\alpha = -(\log \gamma)/\log m$ which heavily depends on $N$, the above theorem deserves its result in the sense that it suggests a possible approximation of the backoff process in 802.11, based on the state-of-the-art theory. We will come back to the preciseness of the approximation later in Remark \ref{rem:precise} where we observe that $\alpha$ is required to be not too close to $1$.
\begin{asparaenum}[\rmfamily\bfseries{R\ref{rem:itp}}.1]
\item As we have discussed in Footnote \ref{foot:inc} and \cite[Chapter 9]{refWhitt} as well as at the beginning of this section, Poisson approximation in Theorem \ref{th:poi} is poor on coarse time-scales, \ie, large time. Therefore, for short-term fairness analysis, the following approximations inspired by \eqref{eq:itpinfinity} and \eqref{eq:itpfinite} are essential:
    \begin{align}
 \widetilde{A}(t)   & \approx  - {\zeta} \cdot c \cdot \mathbf{Y}_{\alpha} \left( \frac{ t}{c} \right), ~& \mbox{for}~K=\infty, ~\alpha\in(1,2), \label{eq:aitpinfinity} \\
  \widetilde{A}(t)  &  \approx  \sqrt{N \zeta} \cdot v_\Omega \cdot \mathbf{B}( t), ~  & \mbox{otherwise}. \label{eq:aitpfinite}
 \end{align}
\item {} [{Answer to \bfseries\textcolor{webmag}{Q5}}] It turns out that for $K=\infty$ and $\alpha \in (1,2)$, the superposition process $A(\zeta\bar\Omega t)=\sum_{n=1}^{N} A^n( \zeta\bar\Omega t )$ exhibits {\bf long-range dependence} due to the {\it heavy} power tail of inter-arrival times $\Omega$. This process is non-Gaussian and non-stable and has stationary, but {\it strongly dependent}, increments in the sense that it has the same covariance as a multiple of fractional Brownian motion of index $H=(3-\alpha)/2$ \cite{refPoissonBridge}. It is also shown in \cite{refPoissonBridge} that this process is (both locally and globally) {\it asymptotically self-similar} though not self-similar. We believe that networking community has been longing for a mathematical evidence which makes extensive simulations in \cite{refTickooSS} less necessary.
\end{asparaenum}
\end{remark}


Turning back to the discussion of inter-transmission probability $\probability_N [z \vert \zeta]$ in Section \ref{sec:interprobintro}, we demonstrate the strength of the above approximations in the following corollaries where $\zeta$ is now taken to be number of packets transmitted by the tagged node.

\begin{corollary}[Asymp. Inter-Transmission Probability] \label{th:aintertxprob}  \nextline
Suppose $\zeta \gg 1$ and $N \gg 1$. If $K=\infty$ along with $\alpha \in(1,2)$, we have
   \begin{align}
\probability_N [Z = z \vert \zeta] \approx  \displaystyle\int_{-\infty}^{\infty} \int_{q_-(\tau(y))}^{q_+(\tau(y))} \mathbf{Tc}^{\tau(y)/c} (x) \ud x   \cdot {\mathbf{Lv}} (y )  \ud y   \label{eq:aintertxfinaln}
\end{align}
 where $ q_{\pm} (\tau(y)) \textstyle\defeq - \left\{ z \mp \delta-(N-1)\zeta \cdot \tau(y) \right\} /(\zeta c)$, $\delta=1/2$, $
\tau(y) \defeq 1 + \zeta^{(1-\alpha)/\alpha} \slow_0 (\zeta) \cdot y $. Here $\slow_0(\cdot)$ is slowly varying at infinity, $\mathbf{Tc}^\tau (\cdot) $ is the pdf of $\mathbf{Y}_{\alpha}(\tau)$ whose cgf is given by \eqref{eq:cgf}, and $\mathbf{Lv} (\cdot )$ is the pdf of $\mathbf{S}_\alpha(1,1,0)$ whose {\bfseries index} is $\alpha =  - (\log \gamma)/{\log m}$.
\end{corollary}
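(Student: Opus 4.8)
The plan is to decompose the inter-transmission probability by conditioning on the random time $t = \sum_{j=1}^{\zeta} \Omega_j$ that the tagged node $N$ consumes in transmitting its $\zeta$ packets. Since the tagged node's backoffs $\Omega_1,\dots,\Omega_\zeta$ are independent of the point processes $A^1,\dots,A^{N-1}$ of the remaining nodes, the law factorizes: writing $Y$ for the suitably centred and scaled fluctuation of $t$, I would express $\probability_N[Z=z\mid\zeta]$ as the expectation over $Y$ of the conditional probability $\probability[\sum_{n=1}^{N-1} A^n(t) = z \mid Y]$. This reduces the statement to two separate asymptotic analyses: one for the random time, which produces the factor $\mathbf{Lv}$, and one for the superposition evaluated at that time, which produces $\mathbf{Tc}$.

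For the random time I would invoke the generalized (stable) central limit theorem. By Theorem \ref{th:powertail}, $\Omega$ has a Pareto-type tail of index $-\alpha$ with $\alpha\in(1,2)$, and because $\Omega\ge 0$ the tail is one-sided, so the normalized sum converges to a maximally skewed stable law: $(\sum_{j=1}^{\zeta}\Omega_j - \zeta\bar\Omega)/(\zeta^{1/\alpha}\slow_0(\zeta))$ converges weakly to $\mathbf{S}_\alpha(1,1,0)$ for an appropriate slowly varying norming and skewness $\beta=1$, with density $\mathbf{Lv}$. Factoring out $\zeta\bar\Omega$ and using $1/\alpha-1=(1-\alpha)/\alpha$ gives $t\approx \zeta\bar\Omega\,\tau(Y)$ with $\tau(y)=1+\zeta^{(1-\alpha)/\alpha}\slow_0(\zeta)\,y$, exactly as stated (the constant $\bar\Omega$ being absorbed into the slowly varying $\slow_0$, since it is slowly varying in $\zeta$).

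Conditional on $Y=y$, the inner time equals $\zeta\bar\Omega\,\tau(y)$, and I would apply Theorem \ref{th:itp} with reduced time argument $\tau(y)$ and population $N-1$. For $K=\infty$ and $\alpha\in(1,2)$ that theorem yields the weak approximation $\sum_{n=1}^{N-1} A^n(\zeta\bar\Omega\,\tau(y)) \approx (N-1)\zeta\,\tau(y) - \zeta c\,\mathbf{Y}_\alpha(\tau(y)/c)$. To convert the discrete event $\{Z=z\}$ into the integral in the statement I would apply a continuity correction with $\delta=1/2$, writing $\probability[Z=z\mid y]\approx \probability[\,z-\delta \le (N-1)\zeta\tau(y) - \zeta c\,\mathbf{Y}_\alpha(\tau(y)/c) \le z+\delta\,]$. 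Solving these inequalities for $\mathbf{Y}_\alpha(\tau(y)/c)$ returns precisely the endpoints $q_-(\tau(y))$ and $q_+(\tau(y))$, so the conditional probability equals $\int_{q_-(\tau(y))}^{q_+(\tau(y))} \mathbf{Tc}^{\tau(y)/c}(x)\,\ud x$; integrating against $\mathbf{Lv}(y)\,\ud y$ then gives the claimed formula \eqref{eq:aintertxfinaln}.

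The main obstacle I anticipate is rigorously combining the two limits, since both $\zeta\to\infty$ and $N\to\infty$ occur simultaneously and each ingredient nominally requires the other quantity to be large: the stable limit controls the time fluctuation at scale $\zeta^{1/\alpha}$, whereas Theorem \ref{th:itp} already presupposes the joint regime through the scaling constant $c$. Making the conditioning precise demands uniform (in $y$) control of the inner approximation and a dominated-convergence-type justification for passing the limit inside the expectation over $Y$, together with the harmless replacement of $N-1$ by $N$ and careful bookkeeping of the slowly varying normalization folded into $\slow_0$. By comparison, the continuity correction and the interchange of the discrete count with the continuous density $\mathbf{Tc}$ are routine.
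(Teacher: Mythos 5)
Your proposal is correct and follows essentially the same route as the paper's proof: you approximate the superposition of the other $N-1$ nodes by the Intermediate Telecom process from Theorem \ref{th:itp} with a continuity correction $\delta = 1/2$ yielding the endpoints $q_{\pm}(\tau(y))$, represent the tagged node's normalized time via the one-sided stable central limit theorem applied to the power tail of Theorem \ref{th:powertail} (with skewness $\beta = 1$ since $\Omega \geq 0$, exactly as the paper argues from $f_\Omega(-x) = 0$), and integrate the conditional probability against the density $\mathbf{Lv}$ of $\mathbf{S}_\alpha(1,1,0)$. The rigor concerns you raise at the end (uniformity in $y$, simultaneous limits, $N-1$ versus $N$) are not addressed in the paper either, whose proof is likewise a heuristic approximation argument.
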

\begin{proof}
Under the assumption $\zeta \gg 1$ and $N \gg 1$, it follows from Theorem \ref{th:itp} that $\sum_{n=1}^{N-1}A^n(\zeta \bar\Omega t)$ can be approximated by an Intermediate Telecom process so that its marginal distribution takes the form
\begin{align}
& \textstyle\probability \left[ \sum_{n= 1}^{N-1} A^n\left( \zeta \bar\Omega t \right) = z \right] \nonumber \\
& \approx~ \probability \left[ (N-1)\zeta t - \zeta c \mathbf{Y}_{\alpha}(t/c) \in (z-\delta,z+\delta) \right] \nonumber  \\
& = \probability \left[ \mathbf{Y}_{\alpha}(t/c) \in \left( q_-(t) ,q_+(t)  \right) \right] \nonumber \\
& = \probability \left[ \textstyle \int_{q_-(t)}^{q_+(t)} \mathbf{Tc}^{t/c} (x) \ud x    \right].  \label{eq:alastlinet}
\end{align}
 In the meantime, it follows from the definition of skewness $\beta$ and $f_\Omega (-x)=0 $, $\forall x >0$ that $$ \textstyle \beta \defeq \lim_{x\to\infty} \frac{2 F_\Omega^c (x) }{F_\Omega^c (x) + \int_{-\infty}^{-x} f_\Omega (x) dx}  - 1 = 1 .$$ Put $t=\sum_{j=1}^{\zeta} \Omega_j /(\zeta\bar\Omega) $. Applying the lesser-known stable-law central limit theorem \cite[Theorem 4.5.1]{refWhitt} to the power tailedness result of Theorem \ref{th:powertail}, taken together with the fact $\beta =1$, it follows that, for $\zeta \gg 1$,
\begin{align}
t \distapprox   1 + \zeta^{(1-\alpha)/\alpha} \cdot \slow_0 (\zeta) \cdot \mathbf{S}_\alpha(1,1,0) . \nonumber
\end{align}
 Plugging this line into \eqref{eq:alastlinet} yields \eqref{eq:aintertxfinaln}.
\end{proof}
\begin{corollary}[Inter-Transmission Probability] \label{th:intertxprob}  \nextline
Suppose $\zeta \gg 1$ and $N \gg 1$. If $K<\infty$, or $K=\infty$ along with $\alpha \in(2,\infty)$, we have
  \begin{align}
\probability_N [ z \vert \zeta] \approx  \displaystyle{\mathbf{Nm}}\left( \frac{z - (N-1)\zeta}{(N-1)\sqrt{\zeta} v_\Omega} \right)   \label{eq:intertxfinaln}
\end{align}
 where the CV $v_{\Omega}$ is given by \eqref{eq:totalcovcor}, and $ {\mathbf{Nm}} (x ) \defeq \frac{1}{\sqrt{2 \pi}} {\rm e}^{- \frac{x^2}{2}}$.
\end{corollary}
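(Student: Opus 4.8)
The plan is to run the argument of Corollary~\ref{th:aintertxprob} in parallel, but with the heavy-tailed Intermediate Telecom approximation replaced by its finite-variance Brownian counterpart. Since we are in the regime $K<\infty$, or $K=\infty$ with $\alpha\in(2,\infty)$, Theorem~\ref{th:heavy} guarantees $\expectation[\Omega^2]<\infty$, so Theorem~\ref{th:itp} applies in the form \eqref{eq:itpfinite}, and the working approximation \eqref{eq:aitpfinite} is legitimate. First I would rewrite the inter-transmission probability as the marginal law of the aggregate of the other $N-1$ nodes sampled at the random coarse time $t=\sum_{j=1}^{\zeta}\Omega_j/(\zeta\bar\Omega)$, namely $\probability_N[z\vert\zeta]=\probability[\sum_{n=1}^{N-1}A^n(\zeta\bar\Omega t)=z]$, and approximate this aggregate by the Gaussian process of \eqref{eq:aitpfinite}. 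Conditionally on $t$, it is then Gaussian with mean $(N-1)\zeta t$ (using $\expectation[A^n(\bar\Omega\tau)]=\tau$) and standard deviation $\sqrt{(N-1)\zeta}\,v_\Omega$.

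Next I would treat the two independent sources of randomness separately: the intrinsic Brownian jitter $\mathbf{B}$ of the superposition, and the fluctuation of the window length $t$, which is governed by the tagged node alone and is therefore independent of the processes $A^1,\dots,A^{N-1}$. Because $\Omega$ has finite variance in these cases, the ordinary central limit theorem gives $\sum_{j=1}^{\zeta}\Omega_j\distapprox \zeta\bar\Omega+\sqrt{\zeta}\,\sigma_\Omega\,\mathbf{B}(1)$, hence $t\distapprox 1+(v_\Omega/\sqrt{\zeta})\,\mathbf{B}(1)$ using $v_\Omega=\sigma_\Omega/\bar\Omega$. Propagating this through the conditional mean $(N-1)\zeta t$ produces a spread of order $(N-1)\sqrt{\zeta}\,v_\Omega$, while the intrinsic Brownian term contributes only $\sqrt{(N-1)\zeta}\,v_\Omega$.

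The decisive step is the comparison of these two scales: the window-length contribution exceeds the intrinsic aggregate fluctuation by a factor $\sqrt{N-1}$, so in the joint limit $N,\zeta\to\infty$ the latter is asymptotically swamped and the marginal law collapses to a single Gaussian centred at $(N-1)\zeta$ with standard deviation $(N-1)\sqrt{\zeta}\,v_\Omega$. Reading off this Gaussian at the integer $z$ then yields
\[
\probability_N[z\vert\zeta]\approx \mathbf{Nm}\!\left( \frac{z-(N-1)\zeta}{(N-1)\sqrt{\zeta}\,v_\Omega} \right),
\]
which is exactly \eqref{eq:intertxfinaln}.

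The hard part will be making the scale comparison rigorous, i.e.\ justifying that the two Gaussian contributions may be treated additively and that the $O(\sqrt{N\zeta})$ jitter of the $(N-1)$-fold aggregate is dominated by the $O(N\sqrt{\zeta})$ spread induced by the random stopping time $\sum_{j}\Omega_j$. This same step absorbs the harmless replacement of the combined standard deviation $\sqrt{N(N-1)}\,\sqrt{\zeta}\,v_\Omega$ by $(N-1)\sqrt{\zeta}\,v_\Omega$ and the passage from the continuous density to the discrete point mass $\probability_N[z\vert\zeta]$ (up to the normalizing prefactor absorbed into $\approx$); none of these affects the leading Gaussian shape, which is all that \eqref{eq:intertxfinaln} asserts.
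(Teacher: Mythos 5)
Your proposal is correct and follows essentially the same route as the paper's own proof: both condition on the random window $t=\sum_{j=1}^{\zeta}\Omega_j/(\zeta\bar\Omega)$, invoke the Brownian approximation \eqref{eq:aitpfinite} to obtain a conditional Gaussian with mean $(N-1)\zeta t$ and intrinsic spread $v_\Omega\sqrt{(N-1)\zeta}$, apply the ordinary CLT to get $t \distapprox 1+(v_\Omega/\sqrt{\zeta})\,\mathbf{N}(0,1)$, and then collapse the resulting Gaussian mixture by observing that the $\Theta(\sqrt{N\zeta})$ intrinsic jitter is swamped by the $\Theta(N\sqrt{\zeta})$ spread induced by the stopping time. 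Your explicit scale comparison (including the harmless replacement of $\sqrt{N(N-1)}\sqrt{\zeta}\,v_\Omega$ by $(N-1)\sqrt{\zeta}\,v_\Omega$) is exactly the concentration argument the paper uses to reduce its mixing integral to \eqref{eq:intertxfinaln}.
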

\begin{proof}
 Likewise, we have
\begin{align}
& \textstyle\probability \left[ \sum_{n= 1}^{N-1} A^n\left( \zeta \bar\Omega t \right) = z \right] \nonumber \\
& \approx~ \probability \left[ (N-1)\zeta t +  \mathbf{N} \left( 0, v_\Omega^2(N-1) \zeta t \right) \in (z-\delta,z+\delta) \right]. \nonumber  \\
& \approx~ \probability \left[ {\mathbf{Nm}} \left( \frac{z-(N-1)\zeta t}{v_\Omega \sqrt{(N-1)\zeta t}}  \right) \right] \label{eq:lastlinet}
\end{align}
where $\mathbf{N} \left(\mu,\sigma^2 \right)$ is the Gaussian random variable with mean $\mu$ and variance $\sigma^2$. Putting $t=\sum_{j=1}^{\zeta} \Omega_j /(\zeta\bar\Omega) $, $t$ is approximated by $$t \distapprox \frac{1}{\zeta} \cdot \mathbf{N} \left(\zeta,v_\Omega^2 \zeta \right) \disteq 1+ \frac{v_\Omega}{ \sqrt{\zeta}} \cdot \mathbf{N} \left(0,1\right)$$ for $\zeta \gg 1$. Thus \eqref{eq:lastlinet} becomes
\begin{align}
 \int_{-\infty}^{\infty} {\mathbf{Nm}} \left( \frac{z-(N-1) \left(\zeta+ \sqrt{\zeta}v_\Omega x \right) }{v_\Omega \sqrt{(N-1)  \left(\zeta+ \sqrt{\zeta}v_\Omega x \right) }}  \right) {\mathbf{Nm}}(x) \ud x \nonumber
\end{align}
which is approximated as \eqref{eq:intertxfinaln} because the denominator $v_\Omega (N-1)^{1/2}  (\zeta+ \sqrt{\zeta}v_\Omega x )^{1/2} $ is very large so that the first pdf of the integrand is concentrated around $z=(N-1) \left(\zeta+ \sqrt{\zeta}v_\Omega x \right)$.
\end{proof}
\begin{remark}\label{rem:intertransmission}
The derived equations provide us several penetrating insights and answers to {\bfseries\textcolor{webmag}{Q2}} as well. Note that the mean and variance of \eqref{eq:aintertxfinaln} are given by
\begin{align}
\textstyle\bar Z & \textstyle\defeq \sum_{z=0}^\infty z \cdot \probability_N  [ z \vert \zeta]  \approx (N-1) \zeta \nonumber \\
\textstyle\sigma_Z^2 & \textstyle\defeq \left( \sum_{z=0}^\infty z^2 \cdot \probability_N  [ z \vert \zeta] \right) - \bar Z^2 \approx \infty \label{eq:levyvar},
\end{align}
while those of \eqref{eq:intertxfinaln} are given by
\begin{align}
\textstyle\bar Z \approx (N-1) \zeta,  \textstyle \quad \sigma_Z^2  =(N-1)^2 \zeta \cdot v_{\Omega}^2 \label{eq:coxvar}.
\end{align}
\begin{asparaenum}[\rmfamily\bfseries{R\ref{rem:intertransmission}}.1]
\item For the case of \eqref{eq:intertxfinaln}, we can say that $Z$ is approximately Gaussian for large $\zeta$ and $N$:
 \begin{equation}\label{eq:gaussianapp}
 \textstyle Z \distapprox \mathbf{N} \left((N-1)\zeta,(N-1)^2 \zeta v_{\Omega}^2  \right)
 \end{equation}
whereupon the CV of $Z$ can be computed from \eqref{eq:coxvar} as
    \begin{equation}
v_Z \defeq  \sigma_Z / \bar Z \approx {v_{\Omega}}/{\sqrt \zeta} \label{eq:coxcovsimple}.
\end{equation}
Remarkably, we have derived the most general expression of the inter-transmission probability $\probability_N [ z \vert \zeta]$ while \cite{refBergerUniform,refMarkusExp} derived the expressions of $\probability_N [ z \vert \zeta]$ only for $N=2$.
\item \eqref{eq:aintertxfinaln} cannot be simplified in general. However, for for very large $\zeta$, hence very small $c$, it can be easily seen that $Z$ has a L\'evy $\alpha$-stable distribution. 
    Applying \cite[Proposition 2]{refPoissonBridge} to the right-hand side of \eqref{eq:itpinfinity} yields that it is negligible, implying that the inner integral of \eqref{eq:aintertxfinaln} can be removed. Then $Z$ becomes approximately {\it L\'evian} and is expressed in the form $$ \textstyle Z \distapprox \mathbf{S}_\alpha((N-1)\zeta^{\frac{1}{\alpha}}\slow_0 (\zeta),1,(N-1)\zeta). $$ This {\it manifests} the heavy-tail of $Z$, \ie, 
     \begin{equation}\label{eq:levianright}
     \textstyle \probability [ Z > x] \approx \zeta \{(N-1)\slow_0(\zeta) \}^\alpha   C_\alpha \cdot x^{-\alpha}
     \end{equation}
      where $\slow_0 (\cdot)$ is the same function used in $\tau(y)$ in \eqref{eq:aintertxfinaln} and $$ \textstyle C_\alpha = (\alpha-1)/\left({\Gamma(2-\alpha) \sin(\pi (\alpha-1)/2)}\right). $$

\item For the case of $K=\infty$ and $\alpha \in (1,2)$, the representation \eqref{eq:levianright} reveals the striking similarity between the ccdfs of $\Omega$ and $Z$. In terms of regular variation theory, both are {\it regularly varying of index $-\alpha$}, and in Mandelbrot's terminology, {\it Noah effect} of $\Omega$ {\it infiltrates} into $Z$.

\item For the case of $K=\infty$, the inter-transmission probability bifurcates into two different categories at $\alpha =2$ (or $\gamma = 1/m^2$). Plainly speaking, if $\gamma < 1/m^2$, $Z$ can still be approximated by the Gaussian distribution in \eqref{eq:gaussianapp}, otherwise 802.11 suffers from extreme unfairness containing precursors of power-tailed characteristics such as infinite variances and the {\it skewness} ($\beta=1$).

\item The skewness induces {\it leaning tendency} and {\it directional unfairness}. The leaning tendency implies the distribution is heavily leaning to the left, and the tendency increases as $\alpha$ decreases. The directional unfairness\footnotemark implies that while the right part of the inter-transmission probability $z\in (\bar Z,\infty)$ has a heavy power tail given by \eqref{eq:levianright}, its left part $z \in (-\infty,\bar Z)$ decays {\it faster} than exponentially \cite[pp.113]{refWhitt}.

\end{asparaenum}
\footnotetext{The L\'evy $\alpha$-stable law used in this work has support on the entire real line because $\alpha \in (1,2)$.}
\end{remark}

\begin{figure}[t!]
\centering
\centerline{\includegraphics[width=9.5cm]{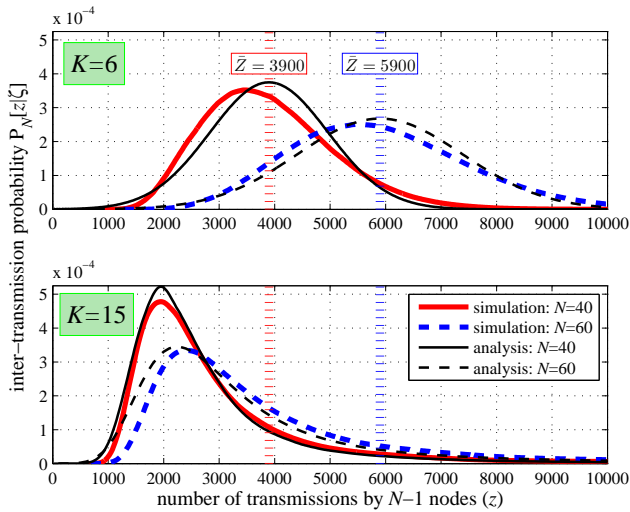}}
\caption{Inter-transmission probability $\probability_N [ z \vert \zeta]$ for $\zeta=100$; $K=6,15$; and $N=40,60$.} \label{fig:intertx}
\end{figure}

We conjecture based on extensive simulations that $\slow(\cdot)$ in \eqref{eq:ccdfregular} is approximately a constant, implying that $\slow_0(\cdot)$ in $\tau(y)$ in \eqref{eq:aintertxfinaln} is also a constant. Then it follows that the constant $\slow$ corresponds to the $y$-intercept of the straight line obtained by taking logarithms of \eqref{eq:ccdfregular}, and can be estimated from Fig. \ref{fig:power}. After manipulation akin to \cite[Theorem 4.5.2]{refWhitt}, we can show a simple relation between them: $$\textstyle \slow_0 =  (\slow/C_\alpha)^{1/\alpha}  / \bar\Omega  $$ which implies that we need to estimate only $\slow$ and $\alpha$.

\ifthenelse{\boolean{longver}}{ It is natural calculating \eqref{eq:aintertxfinaln} should take long time especially since there is no numerical methods to efficiently compute the newly discovered pdf $\mathbf{Tc}^{\tau(y)/c} (\cdot) $ which varies with $\tau(y)/c$ covering pdfs between two extremes \cite{refPoissonBridge}, \ie, L\'evy $\alpha$-stable law (for $\tau(y)/c \to \infty$) and Gaussian law (for $\tau(y)/c \to 0^+ $). What is more, \eqref{eq:aintertxfinaln} is even more complicated because $\mathbf{Tc}^{\tau(y)/c} (\cdot) $ should be integrated over a large number of intervals, \ie, $[q_-(y), q_+(y)]$, $\forall y$. We can get a handle on it only by using \cite[Theorem 4]{refNeilCDF} that lends itself to computing cdfs from oddly shaped cgfs like \eqref{eq:cgf} and an integration method called adaptive Gaussian quadrature method.}{}

In Fig. \ref{fig:intertx}, the inter-transmission probability obtained through ns-2 simulations is compared with the derived formulae of Corollaries \ref{th:aintertxprob} and \ref{th:intertxprob} for $\zeta=100$. It is significant that, for $K=6$, $\probability_N [Z = z \vert \zeta]$ is well approximated by Gaussian formula \eqref{eq:intertxfinaln} along with \eqref{eq:totalcovcor}, \eqref{eq:fixedpoint} and \eqref{eq:fixedpointg} for large $N$. This forms a striking contrast with the case $K=15$ where the distribution \eqref{eq:aintertxfinaln} is leaning to the left and its peak is {\it far apart} from its mean, \ie, $\bar Z=(N-1)\zeta$, meaning that there are {\it even heavier} tails on the right part. Our extensive simulations attested to the inevitability of complicated form \eqref{eq:aintertxfinaln}.

\begin{remark}\label{rem:precise}
{\bf Preciseness of the approximation \eqref{eq:aitpinfinity}}: remains a question due to the underlined assumption of Theorem \ref{th:itp}. Note that $N$ is determined by $\alpha = - (\log \gamma) / \log m$ provided that $\alpha$ is fixed, {\bf whereas} \cite[Theorem 1]{refLimitFractal} demands that $N\to\infty$ provided that $\alpha$ is fixed. Through extensive simulations, we have found out that the approximation \eqref{eq:aitpinfinity} becomes poor as $\alpha \to 1$ (or as $N\to\infty$). Under the above simulation setting, if $N> 80$, the approximation appears not reasonable. A thorough theory addressing this dependence between $N$ and $\alpha$ is left for future work.
\end{remark}



\section{Wavelet Analysis of Long-Range Dependence}\label{sec:longrange}

We provide simulation results to support the argument over the long-range dependence in Section \ref{sec:itpapprox} under the assumption $K=\infty$. Recall from Theorem \ref{th:itp} that the time-scaled version of the superposition arrival process is approximately
$$\textstyle A(\zeta \bar\Omega t) = \sum_{n=1}^{N} A^n(\zeta \bar\Omega t) \approx N \zeta t - \zeta c \cdot \mathbf{Y}_{\alpha} \left( \frac{ t}{c} \right) $$ which holds for $N$ such that $\alpha = - (\log \gamma)/\log m < 2$.
Note that such $N$ is to ensure $\Omega$ is {\it strict-sense heavy-tailed} (See Theorem \ref{th:powertail}). Then by appealing to \cite{refPoissonBridge}, 
one can show that $A(\zeta \bar\Omega t)$ has long-range dependent increments in the sense that
\begin{compactitem}
\item $\textstyle A(\zeta \bar\Omega t)$ has the same covariance as a multiple of fractional Brownian motion of index $H \defeq (3-\alpha)/2$.
\end{compactitem}
It is easy to see that $1/2 < H < 1$ due to $1<\alpha<2$.

All simulations obtained from {\it ns-2} simulator use a $437h$ warm-up period, after which we collected $728h$-long traces. To analyze these traces, we use the latest addition to the toolkit of inference techniques for long-range dependence, \ie, the {\it refined} wavelet-based method using {\it Daubechies} wavelets with $M$ vanishing moments which was proposed by Abry \etal \cite{refWaveletLens} They proposed the first unbiased estimator $y_j$ taking the form $$\expectation [y_j] = \log_2 \left( \expectation \left[ d_j^2  \right] \right) ,$$ considering the complication presented by the property $\expectation [\log(\cdot)] \neq \log (\expectation[\cdot])$ where $d_j$ is called {\it detail processes} of the wavelet transform. \ifthenelse{\boolean{longver}}{It is also shown that this method is applicable even to non-Gaussian processes. This offers a clear advantage to our case where $A'(t)$ is non-Gaussian.}{}

\begin{figure*}[t!]
\centering
\begin{center}
\mbox{
  \subfigure[$K=6$, $N=40$]{\label{fig:lrd6}
        \includegraphics[width=5.5cm]{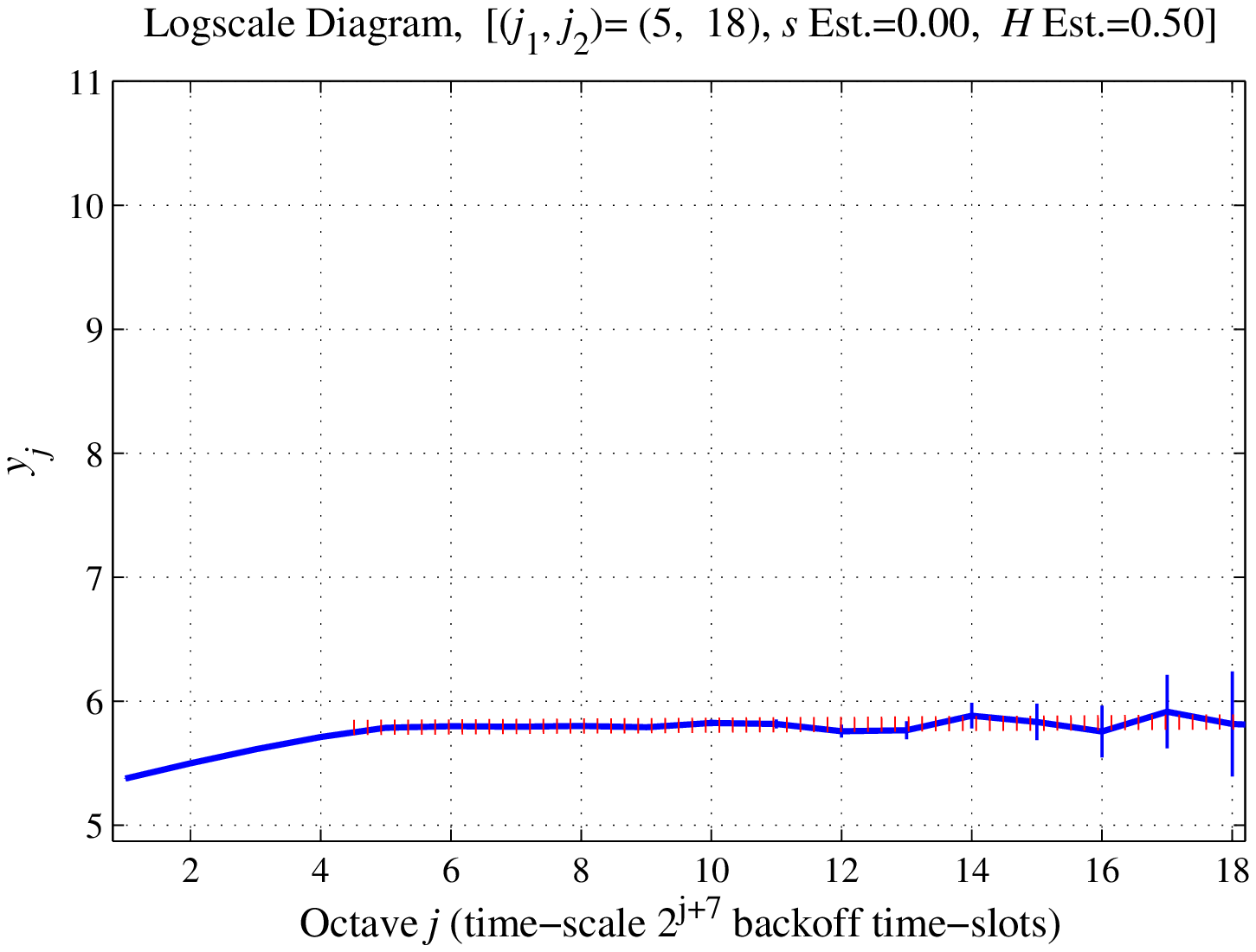}}
  \subfigure[$K=15$, $N=40$]{\label{fig:lrd15}
        \includegraphics[width=5.5cm]{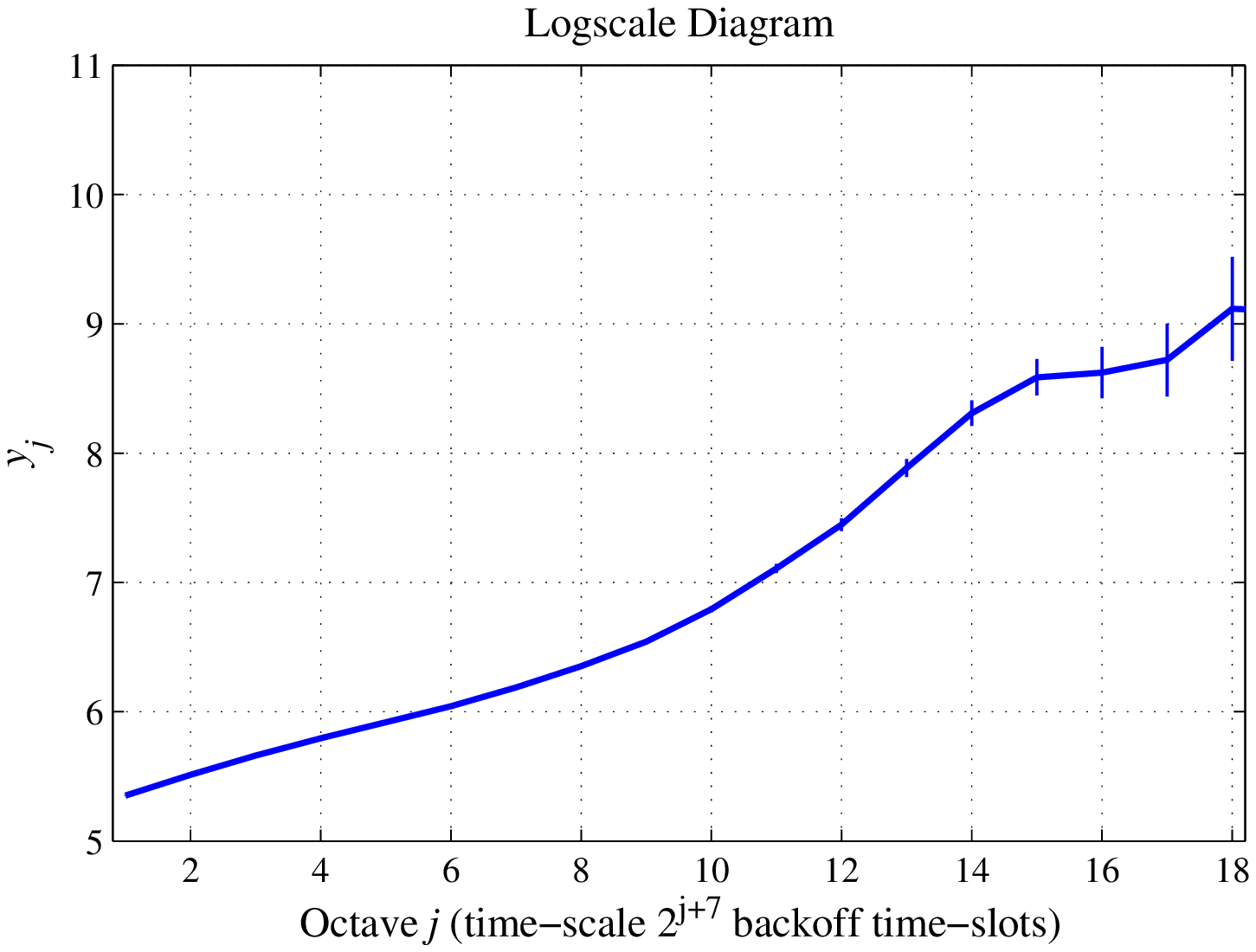}}
  \subfigure[$K=25$, $N=40$]{\label{fig:lrd25}
        \includegraphics[width=5.5cm]{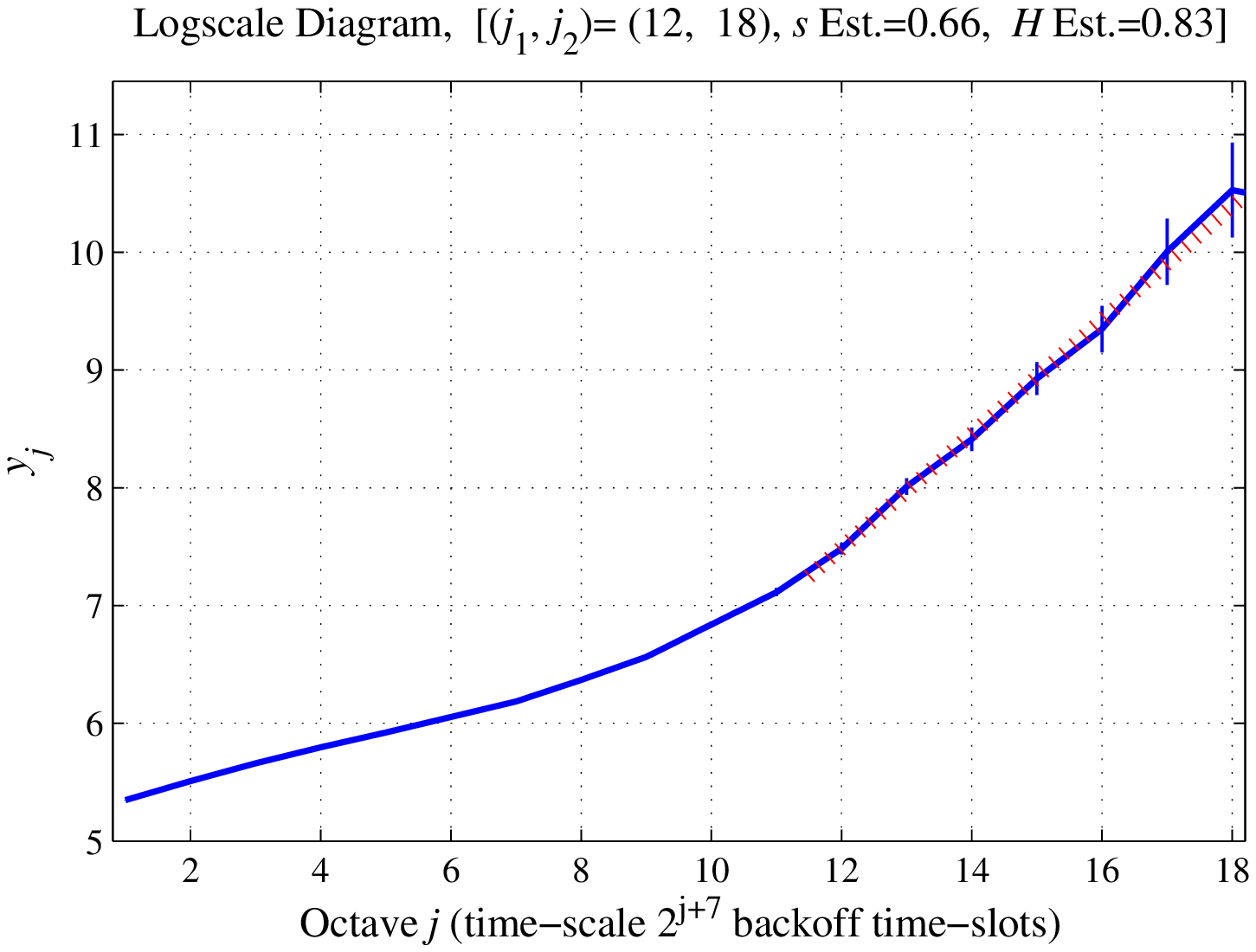}}
        }
\caption{Wavelet spectra using Daubechies wavelets with $M=2$.} \label{fig:lrd}
\end{center}
\end{figure*}

The estimates $y_j$ of the wavelet spectra over all time-scales $j$, called {\it octaves}, are shown in Fig. \ref{fig:lrd} for $K=6$, $K=15$ and $K=25$. Here we fix the other parameters as $N=40$ and $M=2$. Though we present here only the simulation results using Daubechies wavelets with $M=2$, we obtained similar results using Daubechies wavelets with $M>2$ and {\it Discrete Meyer} wavelets. To quantify the integrity of the method, Gaussian $95\%$ confidence intervals corresponding to the variability of $y_j$ are also shown as the vertical segments centered on the estimates $y_j$. Then the measurement of index $H$, called {\it Hurst} parameter, is reduced to the identification of region of {\it alignment}, the determination of the its lower and upper cutoff octaves, $j_1$ and $j_2$, respectively, and the determination of the slope over the alignment region which we denote by $\bar s$. From the slope estimate $\bar s$, we can obtain the estimates of $H$ from the formula $$\bar H \defeq (1+ \bar s )/2 . $$
\ifthenelse{\boolean{longver}}{Decisions whether the alignment region is aligned or not were made based on the Chi-squared goodness of fit test \cite{refWaveletScaling}. Note that the extent of long-range dependence increases with $\bar H$, or equivalently $\bar s$. We also let $s_j$ denote the slope at octave $j$.}


Fig. \ref{fig:lrd25} demonstrates that, for the case $K=25$, the superposition arrival process possesses a sustained correlation structure over a broad range of time-scales $j \in [1, 18]$ where $s_j$ converges to $0.66$ at octave $j=18$, whereas, for the case $K=6$, it shows a weaker correlation structure over a narrow range $j \in [1, 5]$ as shown in Fig. \ref{fig:lrd6}. The estimate of $H$ for $K=25$ over the alignment region $(j_1, j_2) = (12, 18)$ approaches $\bar H = 0.83$ around $(16,17)$ which approximately matches with analytical formula $H=(3-\alpha)/2 = 0.90$ where $\alpha$ is obtained from $\alpha = (\log \gamma)/\log m$, Eqs. \eqref{eq:fixedpoint} and \eqref{eq:fixedpointg}. The slope estimate over the alignment region for $K=6$ is computed as $\bar H = 0.50$, implying that long-range dependence is not observed. A striking observation that can be made by comparing Figs. \ref{fig:lrd6} and \ref{fig:lrd15} with Fig. \ref{fig:lrd25} is that the per-octave slope $s_j$ increases as octave $j$ increases and convergent only if $K$ is large enough as in Fig. \ref{fig:lrd25}. \ifthenelse{\boolean{longver}}{Even for large $K$ as shown in Fig. \ref{fig:lrd25}, the slope is small for low octaves.}{}

\begin{obs}[LRD over coarse times scales]\label{obs:time}  \nextline
 Long-range dependence of the superposition process is conspicuous only over coarse time-scales. 
\end{obs}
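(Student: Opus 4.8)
The plan is to read the Observation as the wavelet-domain manifestation of the dichotomy that Theorems \ref{th:poi} and \ref{th:itp} establish for the \emph{same} superposition process at two different time-scales. First I would recall the two second-order descriptions. On normal (fine) time-scales, Theorem \ref{th:poi} gives a Poisson limit, whose independent increments correspond to $H=\tfrac12$, a flat wavelet log-spectrum (slope $\approx 0$), and no long-range dependence. On coarse time-scales $t=\Theta(N)$, Theorem \ref{th:itp} instead yields the Intermediate Telecom process $\mathbf{Y}_\alpha(\cdot)$, whose increments carry the covariance of a fractional Brownian motion of index $H=(3-\alpha)/2\in(\tfrac12,1)$ for $\alpha\in(1,2)$; this is precisely long-range dependence. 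The claim is therefore already bracketed: LRD is absent in the fine-scale (Poisson) regime and present in the coarse-scale (Telecom) regime, and the task is to show the crossover occurs at \emph{coarse} octaves.

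The quantitative bridge is the wavelet estimator of \cite{refWaveletLens}, for which $\expectation[y_j]=\log_2(\expectation[d_j^2])$ measures second-order energy at time-scale $2^j$. For a process whose local covariance behaves like fractional Brownian motion of effective index $H_j$ at scale $2^j$, the per-octave slope satisfies $s_j\approx 2H_j-1$, so that the alignment estimate obeys $\bar H=(1+\bar s)/2$. I would then locate the crossover through the scaling constant $c\defeq\{N\bar\Omega^{-\alpha}\slow(\zeta\bar\Omega)\}^{1/(\alpha-1)}/\zeta$ of Theorem \ref{th:itp}: recalling from \cite{refPoissonBridge} that $\mathbf{Y}_\alpha(t/c)$ interpolates between an $\alpha$-stable L\'evy motion (independent increments, no LRD) for small time argument and fractional Brownian motion (LRD) for large time argument, the wavelet response at octave $j$ is L\'evy-like with $H_j\approx\tfrac12$ when $2^j\ll c$ and fBm-like with $H_j\to(3-\alpha)/2$ when $2^j\gg c$. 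Using the scaling behaviour of the cgf \eqref{eq:cgf} of $\mathbf{Y}_\alpha(\tau)$, the slope $s_j$ then increases monotonically with $j$ from $0$ toward $2-\alpha>0$, with the turning point near $j\approx\log_2 c$. This is exactly the monotone increase of $s_j$ reported in Fig.~\ref{fig:lrd}.

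Assembling the pieces: since the fine-scale spectrum is flat ($\bar H\approx\tfrac12$, no LRD) and a positive alignment slope $\bar s=2\bar H-1$ with $\bar H\to(3-\alpha)/2$ emerges only once the probed octaves exceed the crossover set by $c$, long-range dependence is conspicuous only over coarse time-scales, which is the Observation. The main obstacle I expect is making this crossover rigorous rather than heuristic: one must pin down the crossover octave as a function of $N$, $\zeta$ and $\alpha$ through $c$, and control the rate at which the finite superposition departs from its Poisson description and approaches the asymptotically self-similar $\mathbf{Y}_\alpha$ limit, for which no closed-form covariance is available. Compounding this, the finite-trace estimation of $\expectation[d_j^2]$ via $y_j$ means $\bar s$ converges only asymptotically in $j$ (indeed the measured $\bar s=0.66$ for $K=25$ has not yet reached the theoretical $2-\alpha$), so a fully rigorous statement would have to quantify convergence over a finite octave window $[j_1,j_2]$ rather than in the limit.
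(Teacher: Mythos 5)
Your outer structure matches the paper's own justification: the Observation is not given a formal proof but is supported by exactly the bracketing you describe --- Theorem \ref{th:poi} (Poisson limit, hence no LRD, on normal time-scales), Theorem \ref{th:itp} (Intermediate Telecom limit with fBm-type covariance of index $H=(3-\alpha)/2$ on time-scales $\zeta\bar\Omega t$), plus the wavelet spectra of Fig.~\ref{fig:lrd} showing the per-octave slope $s_j$ increasing with $j$ and $\bar H$ matching $(3-\alpha)/2$ only at high octaves for large $K$. The paper's Remark \ref{rem:lrd} attributes the lateness of the crossover to the per-node slowdown (intensity $1/\bar\Omega = \Theta(1/N)$, forcing the dilation by $\bar\Omega$ before the premise of Theorem \ref{th:itp} is met) and to the additional time scaling implied by $\zeta\to\infty$ jointly with $N\to\infty$; your first and last paragraphs are faithful to this, and you correctly flag that the statement is empirical rather than rigorously proved.

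However, your quantitative bridge --- locating the crossover at $2^j \approx c$ via an interpolation of $\mathbf{Y}_\alpha$ --- contains a genuine error, in two respects. First, you have the interpolation backwards: by \cite{refPoissonBridge} and the discussion following Theorem \ref{th:itp}, $c^{1/\alpha}\mathbf{Y}_\alpha(t/c)$ converges to $\alpha$-stable L\'evy motion as $c\to 0$ (\ie, \emph{large} time argument $t/c$) while $c^{H}\mathbf{Y}_\alpha(t/c)$ converges to fBm as $c\to\infty$ (\emph{small} time argument); so $\mathbf{Y}_\alpha$ is locally fBm-like and globally L\'evy-like, the opposite of your assignment. Second, and more fundamentally, this interpolation concerns the distributional \emph{shape}, not the second-order structure that $y_j$ estimates: expanding the cgf \eqref{eq:cgf} to order $\theta^2$ gives ${\rm Var}\bigl(\mathbf{Y}_\alpha(\tau)\bigr) \propto \tau^{3-\alpha}$ exactly, for \emph{all} $\tau$, \ie, the limit process carries fBm covariance at every scale, so its wavelet spectrum has slope $2-\alpha$ at all octaves and exhibits no flat-to-steep crossover near $j \approx \log_2 c$. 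The flat region at fine octaves in Fig.~\ref{fig:lrd} is instead a finite-$N$ effect external to $\mathbf{Y}_\alpha$: on normal time-scales the superposition is close to the Poisson limit of Theorem \ref{th:poi}, and the Telecom description becomes valid only after the coarse dilation $\zeta\bar\Omega = \Theta(N\zeta)$, which is the crossover scale the paper's Remark \ref{rem:lrd} points to. The constant $c$ plays a different role: it encodes the relative growth of $\zeta$ and $N$ and selects which limit (L\'evy, Telecom, or fBm) governs, \ie, it replaces the order of the limit operations --- it does not mark an octave at which LRD switches on.
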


\begin{remark}\label{rem:lrd}
\ifthenelse{\boolean{longver}}{ Here the word `conspicuous' is used in the sense that the per-octave estimate $H_j \defeq \left( 1+s_j \right) /2$ closely matches with the theoretical value $H$ computed from the analytical formulae.}{} Essentially, there are two reasons behind this phenomenon which also give us answers to {\bfseries\textcolor{webmag}{Q5}}.

\begin{asparaenum}[\rmfamily\bfseries{R\ref{rem:lrd}}.1]
\item {\bf Per-node process slows down}: It is important to recall that, for $K=\infty$, we first established Poisson process approximation for the superposition process in Theorem \ref{th:poi}, meaning that we {\bf cannot} observe long-range dependence on normal time-scales. As is the constant intensity of the superposition process for Theorem \ref{th:poi}, the constant intensity of the {\bf component} process is essential for Theorem \ref{th:itp}. To satisfy the latter, we had to consider $A^n(\zeta \bar\Omega t)$ instead of $A^n( t)$ because $A^n(t)$ becomes sparser as $N\to\infty$. That being said, we must view the superposition process over {\it coarse} time-scales $\zeta \bar\Omega t$ instead of $t$ to satisfy the premise of Theorem \ref{th:itp}, which explains long-range dependence.
\item  {\bf Additional scaling of time}: Another assumption of the limit regime considered in \cite{refLimitFractal} is $\zeta \to \infty$ at the same time as $N \to \infty$. This implies we need additional scaling of time to compensate for the scaling of space.
\item
    \ifthenelse{\boolean{longver}}{Due to the above two speedups which require a even coarser time-scales, we can observe long-range dependence of aggregate total load only over coarse time-scales.}{} In practical terms, if the wireless link capacity is {\it shared} by many nodes, the aggregate transmission process is highly
     invulnerable to long-range dependence for most practical $K$ values, essentially due to {\it reduced per-node rate} and {\it additional time scaling}.
\end{asparaenum}
\end{remark}

\ifthenelse{\boolean{longver}}{ It is important to note that the arrival process of each {\it individual} flow possesses long-range dependence with $H = (3-\alpha)/2$ if the inter-arrival times of each individual process is heavy-tailed \cite{refLipsky}. On the contrary, long-range dependence of the superposition arrival process $A(\zeta \bar\Omega t)$ is much weaker than that predicted by theory in that $ H_j < H $, or equivalently $ s_j < 2 - \alpha$ for low octaves $j$, and regains its influence only for high octaves $j$.}{}

We also conjecture that the above coarser time scalings caused the empirical analyses of Veres and Boda \cite{refChaoticVeres} (in the context of TCP) and Tickoo and Sikdar \cite{refTickooSS} (in the context of 802.11) not to support long-range dependence of the superposition arrival process of TCP sources --- they observed that $\hat H \approx 0.5$ (or $\hat s = 0$), implying short-range dependence. This is because both 802.11 nodes accessing a common base station and TCP flows traversing a common bottleneck link (i) have similar backoff mechanisms and (ii) reduce (or slow down) their transmission rates to share the given capacity as the population increases. \ifthenelse{\boolean{longver}}{It is interesting that this simple analogy constitutes the fundamental causes of the absence of long-range dependence identified in Observation \ref{obs:time}.}{} \ifthenelse{\boolean{longver}}{A riddle is solved.}

\section{Concluding Remarks}\label{sec:conclusion}
Beginning with derivation of per-packet backoff distribution, based on which we studied its coefficient of variation that plays a key role in formulating short-term fairness in later sections, we have conducted a rigorous analysis of the backoff process in 802.11 and provided answers to several open questions. 

The power-tail principle states that the per-packet backoff has a truncated {\bf Pareto-type} tail distribution, a simplistic description elucidating existing works. This in turn indicates that its heavy-tailedness in the strict-sense inherits from {\it collision} and paves the way for the rest of analysis. The {\bf dichotomy} of aggregation, proven with the aids of a recent advance \cite[Theorem 1]{refLimitFractal} in probabilistic community, now tells the whole story of contrary limits of the superposition process, \ie, Poisson process and Intermediate Telecom process, emphasizing the importance of time-scales on which we view the backoff processes. Thanks to the applicability of \cite{refLimitFractal} widened by the {\bf order-free} scaling operations of time ($\zeta$) and population ($N$), we identified {\it long-range dependence} in 802.11 and discovered that the inter-transmission probability bifurcates into two categories: either approximately Gaussian or a complicated distribution which, under a limiting condition, simplifies to {\it L\'evy} $\alpha$-stable distribution with $\alpha \in (1,2)$ possessing strong power-tail characteristics.

Though we have also conducted empirical analysis using wavelet-based method to support long-range dependence behavior inherent in 802.11, since we are with Willinger \etal \cite{refWil} on the point --- of cardinal importance is to advance our genuine physical understanding applicable to many other systems, we believe that the essence of our analysis of long-range dependence lies in its mathematical explanation for the behavior. That is, the heavy-tailed inter-arrival time of each per-node transmission process causes long-range dependence of the aggregate transmission process at the base station though this dependence is seldom observed.

These results explore the fundamental principles characterizing the backoff process in 802.11. Some of them recall to our mind the beauty of simplicity, governing the asymptotic dynamics of 802.11, and the others form the theoretical groundwork of short-term fairness. 


\section*{Acknowledgment}
The authors would like to thank Jean-Yves Le Boudec for helpful discussions and penetrating comments which have improved the quality of this work. We also would like to thank Ingemar Kaj for elaborating upon his work. Lastly, we would like to thank Hans Alm{\aa}sbakk for making it possible to run very long simulations on a cluster of a dozen processors.

\bibliographystyle{abbrv}
\bibliography{IEEEabrv,bib2}

\appendix
\section{Appendix: Proofs}

\subsection{Derivation of \eqref{eq:totalcovcor}}\label{sec:dertotal}
Plugging $b_k = {m^k} {b_0}$ into $v_{\Omega} = \sigma_{\Omega}/\bar \Omega$  yields
 \begin{equation}
v_{\Omega} =  \sqrt{\textstyle\frac{\Upsilon }{ \left( \sum_{k \in \K} \textstyle(b_0 m^k - 1/2 ) \gamma^k \right)^2 } -1 }. \label{eq:temptotalcov}
\end{equation}
Here the nominator inside the square root is simplified as
\begin{align}
&\Upsilon \defeq \textstyle \left( \sum_{k=0}^K (b_0 m^k - 1/2 )^2 \gamma^k (1+ v_k^2)\right) \nonumber \\ & +  2 \sum_{k=1}^K (b_0 m^k - 1/2 ) \gamma^k \left( b_0 \frac{m^k-1}{m-1}  -k/2 \right) \nonumber \\ & \textstyle = \left( \displaystyle\sum_{k=0}^K \textstyle \left( \frac{m+1}{m-1} +v_k^2 \right) (b_0 m^k - 1/2 )^2 \gamma^k \right) \nonumber \\ & \textstyle - \frac{2}{m-1}(b_0-1/2)^2 - \displaystyle\sum_{k=1}^K \textstyle(b_0 m^k - 1/2 ) \gamma^k (k+\frac{2b_0-1}{m-1})\nonumber\\
& = \textstyle\sum_{k=0}^K \textstyle \left( \frac{m+1}{m-1} +v_k^2 \right) \left(b_0 m^k - \frac{1}{2} \right)^2 \gamma^k \nonumber \\ & \textstyle - \left( k+\frac{2b_0-1}{m-1} \right) \left(b_0 m^k - \frac{1}{2} \right) \gamma^k . \nonumber
\end{align}
Plugging the last line into \eqref{eq:temptotalcov} yields \eqref{eq:totalcovcor}.

\subsection{Proof of Theorem \ref{th:powertail}}
Throughout the proof, we denote the sets of real numbers (positive real numbers), integers (positive integers), and rational numbers by $\R$ ($\R^+ $), $\Z$ ($\Z^+$) and $\Q$, to simplify the exposition. Denoting the LST of $f_i (b)$ by $F_i (s)$, we begin the proof by considering the LST of \eqref{eq:totalpdf}:
\begin{equation}\label{eq:lsttotalpdf}
\textstyle F_{\Omega} (s) =  \textstyle \frac{ 1 - \gamma}{\gamma} \underbrace{\sum_{k =0}^{\infty} \left\{ \prod_{i=0}^{k} \gamma F_i (s) \right\} }_{G(s) }.
\end{equation}
This is an infinite sum of the products of
\begin{equation}
F_i (s) = \frac{  1- \exp\left(-(2b_0 m^i - 1) s \right) }{  (2b_0 m^i - 1) s } \label{eq:lstuni}
\end{equation}
that is the LST of the uniform distribution with mean $b_0 m^i - 1/2$. For notational simplicity, we adopt the change of variable $x \defeq 2b_0 s$ such that $x$ also belongs to $\R^+$. Thus we have
\begin{equation}\label{eq:lsttotalpdf2}
 G (x) =  \sum_{k =0}^{\infty} \left\{ \prod_{i=0}^{k} \gamma \frac{  1- \exp\left(-(m^i - 1/(2b_0)) x \right) }{  ( m^i - 1/(2b_0)) x} \right\} .
\end{equation}
Since $F_i(x) < 1$ for $x \in \R^+$, it is easy to see that $G(x)$ is convergent on $\R^+$. Then it follows from {\it Bernstein's Theorem} \cite[pp.439]{refFellerProb2} that $G(x)$ is {\it completely monotone}. That is, $G(x) > 0 $ and it {\bf has} derivatives of all orders, which satisfy
\begin{equation}\label{eq:compmonotone}
\textstyle (-1)^i \frac{\ud^i G}{\ud x^i} (x) > 0,~ \forall i \in \Z^+ ,
\end{equation}
which implies that the $i$th derivative of $G (x)$ is {\it strictly monotone} for all $i \in \Z^+$.

\vspace{1mm}
{\bfseries\rmfamily Step 1: Recursive relation in $G (\cdot)$}

The crucial observation that paves the way for applying the theory of regular variation \cite{refRegularBingham} is the following {\it recursive relation} hidden in the underbraced term of \eqref{eq:lsttotalpdf}:
\begin{equation}\label{eq:lstgx}
 G(x) = \gamma F_0 (x)  \left\{ 1 + H(mx) \right\}
\end{equation}
where
\begin{equation}\label{eq:lsttotalpdf3}
 H (x) =  \sum_{k =0}^{\infty} \left\{ \prod_{i=0}^{k} \gamma \frac{  1- \exp\left(-(m^i - 1/(2b_0 m)) x \right) }{  ( m^i - 1/(2b_0 m)) x} \right\} .
\end{equation}
Let $\alpha = -({\log\gamma})/{\log m} \in \R^+$ and $z \defeq \ceiltheta \in \Z^+$ which designates the smallest integer not less than $\alpha$. It follows from $\alpha >0 $ that $z  \geq 1 $. Appealing to Theorem \ref{th:heavy} and the basic property of the LST, \ie, $\lim_{s\to 0^+} \frac{\ud^z F_{\Omega}}{\ud s^z} = (-1)^z \expectation[ \Omega^z ]$ for $z \in \Z^+ $, it follows that
\begin{align}\nonumber
\textstyle \lim_{x\to 0^+} \frac{\ud^z G}{\ud x^z} (x) & = \textstyle\frac{1}{(2b_0)^z} \lim_{s\to 0^+ } \frac{\ud^z G}{\ud s^z} (s)\\
&= \textstyle\frac{\gamma}{(2b_0)^z (1-\gamma)} (-1)^z \expectation[ \Omega^z ] = (-1)^z \cdot \infty . \nonumber
\end{align}
 Recall $\lim_{x\to 0^+} \frac{\ud^{i} G}{\ud x^{i}} (x)$ is finite for $i < z$ by Theorem \ref{th:heavy}. Likewise, it is easy to see that $\lim_{x\to 0^+} \frac{\ud^{i} H}{\ud x^{i}} (x)$ is finite for $i < z$ and infinite for $i\geq z$. Moreover, since both the nominator and denominator of the limit
\begin{align}
 \lim_{x\to 0^+} \frac{\frac{\ud^z G}{\ud x^z} (x)}{\frac{\ud^z H}{\ud x^z} (x)}  \label{eq:ratiodiff}
\end{align}
are infinite, we may remove arbitrary number of products whose $z$th derivatives at $x=0^+$ are finite from both of $G(x)$ and $H(x)$, implying in turn that we may replace the summation operation $\sum_{k=0}^{\infty}$ in \eqref{eq:lsttotalpdf2} and \eqref{eq:lsttotalpdf3} by $\sum_{k=k'}^{\infty}$ for any $k' \geq 0$. Formally speaking, we have
\begin{align}
\frac{\ud^z }{\ud x^z} G(x) & = \left\{  \frac{\ud^z }{\ud x^z} \sum_{k =0}^{k'-1} \prod_{i=0}^{k} \gamma F_i(x) \right\} \nonumber \\ & + \left\{ \frac{\ud^z }{\ud x^z} \prod_{i=0}^{k'-1} \gamma F_i(x) \right\} \left\{ \sum_{k =k'}^{\infty}  \prod_{i=k'}^{k} \gamma F_i(x) \right\} \nonumber \\
& + \left\{  \prod_{i=0}^{k'-1} \gamma F_i(x) \right\} \left\{ \frac{\ud^z }{\ud x^z} \sum_{k =k'}^{\infty}  \prod_{i=k'}^{k} \gamma F_i(x) \right\}  \label{eq:ratiodiff2}
\end{align}
where only the third term \eqref{eq:ratiodiff2} becomes infinite as $x \to 0^+$. Therefore, we can easily see that the difference between $G(x)$ and $H(x)$ vanishes as $k'$ increases and hence \eqref{eq:ratiodiff} must be $1$.

Taking derivatives of both sides of \eqref{eq:lstgx} $z$ times and after some manipulation, it becomes clear that it is sufficient to consider only infinite terms which are related to each other in the following form:
\begin{align}
\textstyle h(m) & \defeq \lim_{x\to 0^+} \frac{\frac{\ud^z G}{\ud x^z} (mx)}{\frac{\ud^z G}{\ud x^z} (x)} = \lim_{x\to 0^+} \frac{\frac{\ud^z H}{\ud x^z} (mx)}{\frac{\ud^z G}{\ud x^z} (x)} \lim_{x\to 0^+} \frac{\frac{\ud^z G}{\ud x^z} (mx)}{\frac{\ud^z H}{\ud x^z} (mx)} \nonumber \\ & \textstyle = m^{-z} \gamma^{-1} = m^{\alpha -z} \label{eq:quantifierm}
\end{align}
where we also exploited the fact that \eqref{eq:ratiodiff} is $1$.  Because the convergence of \eqref{eq:quantifierm} holds for any real sequences of $x_k \to 0^+$, we have that $h(y) = y^{\alpha-z}$ for $y \in  \M$ where $\M \defeq \{ m^i ~\vert~ i \in \Z \}$ is a countably infinite set that is {\it nowhere} dense in $\R^+$. The set on which the relation $h(y)=y^{\alpha-z}$ holds is often baptized {\it quantifier set} in regular variation theory.

\vspace{1mm}
{\bfseries\rmfamily Step 2: Quantifier set is dense in $\R^+$}

 We will show that $h(y)=y^{\alpha-z}$ holds on a dense subset $\Lset$ of $\R^+$. Define a set $$\textstyle\Lset \defeq \{ \lambda \in \R^+ ~\vert~ (\log \lambda)/\log m \in \R  \backslash \Q \}$$ where $\R  \backslash \Q$ is the set of irrational numbers. It should be clear that $\M$ and $\Lset$ are disjoint, \ie, $\M \cap \Lset = \emptyset$ and the set $\Lset$ is {\it dense} in $\R^+$ because it can be rewritten as $\Lset = \{ m^{{y}} \in \R^+ ~\vert~ y \in \R  \backslash \Q \}$. Defining
 $$ \textstyle \Upsilon(y,x) \defeq \frac{\ud^z G}{\ud x^z} (y x) / \frac{\ud^z G}{\ud x^z} (x),$$
 we can see that $\Upsilon(y,x)$ is strictly decreasing in $y$ because it follows from \eqref{eq:compmonotone}, \ie, complete monotonicity, that
 $$ \textstyle \frac{\ud \Upsilon(y,x)}{\ud y} = \frac{\ud (yx)}{\ud y} \frac{\ud^{z+1} G}{\ud x^{z+1}} (y x) / \frac{\ud^z G}{\ud x^z} (x)  < 0, ~\forall z\in \Z^+ .$$
 Pick $\lambda \in \Lset$ in the interval $(m^i, m^{i+1})$ for {\it any} $i \in \Z$. Since $\Upsilon(y,x)>0$ is strictly decreasing in $y$, it is upper-bounded by $ m^{i(\alpha-z)}$ as $x\to 0^+$, meaning that $\Upsilon(y,x)$ is ultimately bounded in $x$. From its series expansion, it is easy to see that it is ultimately monotone in $x$ as $x\to 0^+$. Then we can apply \cite[Theorem 3.14]{refAnalysis} to show that there exists $\tilde\alpha$ such that
\begin{equation}\label{eq:quantifierl}
\textstyle h(\lambda) = \lim_{x\to 0^+} \Upsilon(\lambda,x) = \lambda^{\tilde\alpha-z} ,
\end{equation}
which in turn implies that $h(\lambda^j) = \lambda^{j(\tilde\alpha-z)}$, $\forall j \in \Z$, as \eqref{eq:quantifierm} did. {\it Assume that $\tilde\alpha \neq \alpha$}. Because $\Upsilon(y,x)$ is strictly decreasing in $y$, irrespective of $z$, we have
\begin{equation}\label{eq:mcondition}
\textstyle m^{\alpha-z}  \leq \lim_{x\to 0^+} \Upsilon(y,x)  \leq 1
\end{equation}
for $y \in (1, m)$. Put $\hat y \defeq m^{-\lfloor j (\log\lambda)/\log m \rfloor} \lambda^j$ for $j \in \Z$. This can be rearranged as $$\hat y= m^{j(\log\lambda)/\log m - \lfloor j (\log\lambda)/\log m \rfloor}$$ and $(\log \lambda ) / \log m$ is irrational, hence its exponent is on $(0,1)$ and $\hat y$ is on the interval $(1,m)$. We now have from \eqref{eq:quantifierm} and \eqref{eq:quantifierl} that
\begin{align}
\textstyle \lim_{x\to 0^+} \Upsilon(\hat y,x) & = \textstyle \lim_{x\to 0^+} \frac{\frac{\ud^z G}{\ud x^z} ( m^{-\lfloor \frac{j \log\lambda}{\log m} \rfloor} \lambda^j  x)}{\frac{\ud^z G}{\ud x^z} (\lambda^j x)} \cdot  \frac{\frac{\ud^z G}{\ud x^z} ( \lambda^j x)}{\frac{\ud^z G}{\ud x^z} (x)}  \nonumber \\
 & = m^{-\lfloor \frac{j \log\lambda}{\log m} \rfloor (\alpha-z )} \lambda^{j (\tilde\alpha-z)} \nonumber \\
 & =  m^{-\lfloor \frac{j \log\lambda}{\log m} \rfloor (\alpha-z ) + \frac{j \log\lambda}{\log m} (\tilde\alpha-z)} \nonumber\\
 & = m^{\left( \frac{j \log\lambda}{\log m} - \lfloor \frac{j \log\lambda}{\log m} \rfloor \right) (\alpha-z )} \cdot \lambda^{ j(\tilde\alpha-\alpha)} \nonumber.
\end{align}
where the key point is that the second equality follows from $\M \cap \Lset = \emptyset$. Since the last term belongs to the closed interval $$\textstyle \mathcal{I}(j) \defeq [m^{\alpha - z} \lambda^{ j(\tilde\alpha-\alpha)}, \lambda^{ j(\tilde\alpha-\alpha)}]$$ and $\tilde\alpha \neq \alpha$, we must be able to pick $j \in \Z$ such that $\mathcal{I}(j)$ does not overlap with $[m^{\alpha-z},1]$. \ifthenelse{\boolean{longver}}{In other words, \eqref{eq:mcondition} does not hold any longer.}{}
This proves by contradiction that $h(\lambda)=\lambda^{\alpha-z}$ holds for $\lambda \in \Lset$ that is dense in $\R^+$.

\vspace{1mm}
{\bfseries\rmfamily Step 3: Applying regular variation theory}

Applying the `Karamata Theorem for monotone functions' \cite[Theorem 1.10.2]{refRegularBingham} to the conclusion we obtained in Step 2 establishes that $\frac{\ud^z G}{\ud x^z} (x)$ is {\it regularly varying} (on the right) at the origin $x=0$ with index $\alpha-z$. Formally speaking, $G(s)$ satisfies
\begin{equation}\label{eq:diffgsregular}
\textstyle\frac{\ud^z G}{\ud s^z} (s) \sim s^{\alpha-z} \slow^* \left( \frac{1}{s} \right) \quad \textrm{ as}~s \to 0^+,
\end{equation}
 where $\slow^*\left( x \right)$ is slowly varying at infinity $x=\infty$, \ie, $$\lim_{x \to \infty} \slow^*\left( yx \right)/\slow^*\left( x \right) = 1$$ for all $y \in \R^+$. Note that the original Karamata Tauberian Theorem in \cite[Theorem 1.7.1]{refRegularBingham} and \cite[pp.445]{refFellerProb2} cannot be applied due to the fact $\alpha - z \leq 0$. These theorems are complemented by the modified Karamata Tauberian Theorem in \cite[Theorem 8.1.6]{refRegularBingham} and \cite{refTauberian}, which we apply to \eqref{eq:diffgsregular} to show \eqref{eq:ccdfregular}.

\end{document}